\definecolor{fxtarget}{rgb}{0.0000,0.0000,0.6823}
\renewcommand{\orcidID}[1]{\orcidlink{#1}}
\newcommand{\AS}[1]{\mathsf{AS}_{{#1}}}
\newcommand{\MS}[1]{\mathsf{MS}_{{#1}}}
\newcommand{\ed}{\mathsf{ed}}
\newcommand{\edit}{\mathrm{edit}}
\newcommand{\ins}{\mathrm{ins}}
\newcommand{\del}{\mathrm{del}}
\newcommand{\sub}{\mathrm{sub}}
\newcommand{\bs}{\mathit{b}}
\newcommand{\lzsssr}{\mathsf{LZSS}}
\newcommand{\LZsssr}{\mathit{z}_{\mathrm{ss}}}
\newcommand{\lzss}{\mathsf{LZSS}_{no}}
\newcommand{\LZss}{\mathit{z}_{\mathrm{ssno}}}
\newcommand{\lzend}{\mathsf{LZEnd}}
\newcommand{\LZEnd}{\mathit{z}_{\mathrm{e}}}
\newcommand{\LZEndOpt}{\mathit{z}_{\mathrm{end}}}
\newcommand{\lzseveneight}{\mathsf{LZ78}}
\newcommand{\LZSevenEight}{\mathit{z}_{\mathrm{78}}}
\newcommand{\LZ}{\mathit{z}_{\mathrm{77no}}}
\newcommand{\LZsr}{\mathit{z}_{\mathrm{77}}}
\newcommand{\Substr}{\mathsf{Substr}}
\begin{document}

\title{Tight Additive Sensitivity \\ on LZ-style Compressors and String Attractors}

\author{%
Yuto~Fujie\inst{1} %
\and
Hiroki~Shibata\inst{1}\orcidID{0009-0006-6502-7476}
\and
Yuto~Nakashima\inst{2}\orcidID{0000-0001-6269-9353}
\and
Shunsuke~Inenaga\inst{2}\orcidID{0000-0002-1833-010X}
}

\institute{%
    Joint Graduate School of Mathematics for Innovation, Kyushu University, Japan\\
    \email{\{fujie.yuto.104, shibata.hiroki.753\}@s.kyushu-u.ac.jp}
    \and
    {Department of Informatics, Kyushu University, Japan}\\
    \email{\{nakashima.yuto.003, inenaga.shunsuke.380\}@m.kyushu-u.ac.jp}
}

\authorrunning{Fujie, Shibata, Nakashima, Inenaga}

\maketitle
\begin{abstract}
  The \emph{worst-case additive sensitivity} of a string repetitiveness measure $c$ is defined to be the largest difference between $c(w)$ and $c(w')$,
  where $w$ is a string of length $n$ and $w'$ is a string that can be obtained by performing a single-character edit operation on $w$.
  We present $O(\sqrt{n})$ upper bounds for the worst-case additive sensitivity of the smallest string attractor size $\gamma$ and the smallest bidirectional scheme size $\bs$, which match the known lower bounds $\Omega(\sqrt{n})$ for $\gamma$ and $\bs$~[Akagi et al. 2023].
  Further, we present matching upper and lower bounds 
  for the worst-case additive sensitivity of the Lempel-Ziv family - 
  $\Theta(n^{\frac{2}{3}})$ for LZSS and LZ-End, and $\Theta(n)$ for LZ78.
  \keywords{data compression \and sensitivity \and Lempel-Ziv family \and bidirectional schemes \and string attractors}
\end{abstract}


\section{Introduction}

Measuring the \emph{repetitiveness} of strings
is one of the most fundamental studies on strings which attract recent attention.
Examples of string repetitiveness measures are
the \emph{substring complexity} $\delta$~\cite{KociumakaNP20},
the smallest \emph{string attractor} size $\gamma$~\cite{KempaP18},
the number $r$ of runs in the \emph{Burrows-Wheeler transform}~\cite{Burrows94ablock-sorting},
the size $\bs$ of the smallest \emph{bidirectional scheme}~\cite{StorerS82},
the size $z$ of the Lempel-Ziv factorization~\cite{LZ77} (and its variants~\cite{StorerS82,KreftN13,KempaS22}), and
the smallest grammar size $g$~\cite{CharikarLLPPSS05,Rytter03}.
We refer readers to the survey~\cite{repetitiveness_Navarro21a} for comparisons of these measures and others.

Akagi et al.~\cite{AKAGI2023104999} introduced the notion of \emph{sensitivity} of string repetitiveness measures, which evaluates the perturbation of the measures after an edit operation (insertion, deletion, or substitution of a character) is performed on the input string.
Sensitivity is a mathematical model for robustness of string compressors/repetitiveness measures against dynamic changes and/or errors.
Of two versions of sensitivity,
the work in~\cite{AKAGI2023104999} mostly focuses on
the worst-case \emph{multiplicative sensitivity} $\MS{\edit}(c,n)$
with edit operation $\edit \in \{\ins, \del, \sub\}$
that is defined to be the maximum of $c(T')/c(T)$, where $c$ is the measure, $T$ is any string of length $n$, and $T'$ is any string that can be obtained by a single-character edit operation from $T$.
The other alternative is the worst-case \emph{additive sensitivity} $\AS{\edit}(c,n)$, which is defined to be the maximum of $c(T') - c(T)$.
We note that most of the results on the additive sensitivity in~\cite{AKAGI2023104999}
are byproducts of their multiplicative sensitivity,
and they are evaluated in terms of the measure $c$.
For instance, the multiplicative sensitivity $3$ of the overlapping LZSS $\LZsssr$ immediately leads to a $2 \LZsssr$ additive sensitivity~\cite{AKAGI2023104999}.
The behavior of $\AS{\edit}(c,n)$ in terms of $n$ is not well understood
for a majority of measures,
except for the trivial bound for $\delta$~\cite{AKAGI2023104999} and
the $\Omega(\sqrt{n})$ lower bounds for $\gamma$, $\bs$ and $r$~\cite{AKAGI2023104999,GiulianiILRSU25}.

In this paper, we show tight upper and lower bounds of $\AS{\edit}(c,n)$ in terms of $n$ for the following measures $c$:
the smallest string attractor $\gamma$,
the smallest bidirectional scheme $\bs$,
the Lempel-Ziv family including
the overlapping/non-overlapping LZSS $\LZsssr$ and $\LZss$~\cite{StorerS82},
the optimal LZ-End $\LZEndOpt$~\cite{KempaS22},
and LZ78 $\LZSevenEight$~\cite{LZ78}.
We present $O(\sqrt{n})$ upper bounds for the worst-case additive sensitivity of $\gamma$ and $\bs$, which match the known lower bounds $\Omega(\sqrt{n})$ for $\gamma$ and $\bs$~\cite{AKAGI2023104999}.
Further, we present matching upper and lower bounds $\Theta(n^{\frac{2}{3}})$ 
for the worst-case additive sensitivity of $\LZsssr$, $\LZss$, and $\LZEndOpt$.
We also present an $\Omega(n)$ lower bound instance for
the worst-case additive sensitivity of $\LZSevenEight$,
which matches a na\"ive $O(n)$ upper bound.
Our results and previous results are summarized in Table~\ref{tbl:comparisons}.

Independently to our work,
Blocki et al.~\cite{BlockiLSY25} presented
an $O(n^{\frac{2}{3}})$ upper bound
and an $\Omega(n^{\frac{2}{3}} \log^{\frac{1}{3}} n)$ lower bound
for the worst-case additive sensitivity of the non-overlapping LZ77 factorizations~\cite{LZ77}.
While they provide a specific constant factor for the upper bound,
their lower bound is slightly loose (up to a logarithmic factor).
We remark that our tight $\Theta(n^{\frac{2}{3}})$ sensitivity bounds
for (non)overlapping LZSS can readily extend to (non)overlapping LZ77.

\begin{table}[t]
  \centering
  \label{tbl:comparisons}
  \caption{The worst-case additive sensitivity of the string compressors/repetitiveness measures studied in this paper, evaluated in terms of the length $n$ of the string. All the bounds hold for any type of edit operations (substitutions, insertions, and deletions). All the bounds except those from~\cite{AKAGI2023104999,GiulianiILRSU25} are our new results.}
  \begin{tabular}{|l||c|c|}
    \hline
    \multirow{2}{*}{string compressor/repetitiveness measure} & \multicolumn{2}{c|}{worst-case additive sensitivity} \\ \cline{2-3}
    & upper bound & lower bound \\ \hline \hline
    substring complexity $\delta$ & 1~\cite{AKAGI2023104999} & 1~\cite{AKAGI2023104999} \\ \hline
    smallest string attractor $\gamma$ & $O(\sqrt{n})$ & $\Omega(\sqrt{n})$~\cite{AKAGI2023104999} \\ \hline
    smallest bidirectional scheme $\bs$ & $O(\sqrt{n})$ & $\Omega(\sqrt{n})$~\cite{AKAGI2023104999} \\ \hline
    non-overlapping LZSS $\LZss$ & $O(n^{\frac{2}{3}})$ & $\Omega(n^{\frac{2}{3}})$ \\ \hline
    overlapping LZSS $\LZsssr$ & $O(n^{\frac{2}{3}})$ & $\Omega(n^{\frac{2}{3}})$ \\ \hline
    non-overlapping LZ77 $\LZ$ & $O(n^{\frac{2}{3}})$~\cite{BlockiLSY25} & $\Omega(n^{\frac{2}{3}})$ \\ \hline
    overlapping LZ77 $\LZsr$ & $O(n^{\frac{2}{3}})$ & $\Omega(n^{\frac{2}{3}})$ \\ \hline
    optimal LZ-End $\LZEndOpt$ & $O(n^{\frac{2}{3}})$ & $\Omega(n^{\frac{2}{3}})$ \\ \hline
    greedy LZ-End $\LZEnd$ & - & $\Omega(n^{\frac{2}{3}})$ \\ \hline 
    LZ78 $\LZSevenEight$ & $O(n)$ & $\Omega(n)$ \\ \hline
    run-length BWT $r$ & - & $\Omega(\sqrt{n})$~\cite{GiulianiILRSU25} \\ \hline
  \end{tabular}
\end{table}

\section{Preliminaries}

\subsection{Strings}
Let $\Sigma$ be an {\em alphabet}.
An element of $\Sigma^*$ is called a {\em string}.
The length of a string $T$ is denoted by $|T|$.
The empty string $\varepsilon$ is the string of length 0.
Let $\Sigma^n$ denote the set of strings of length $n$.
For a string $T = xyz$, $x$, $y$ and $z$ are called
a \emph{prefix}, \emph{substring}, and \emph{suffix} of $T$, respectively.
Let $\Substr(T)$ denote the set of substrings of $s$.
The $i$-th symbol of a string $T$ is denoted by $T[i]$ for $1 \leq i \leq |T|$.
Let $T[i..j]$ denote the substring of $T$ that begins at position $i$ and ends at position $j$ for $1 \leq i \leq j \leq |T|$.
For convenience, let $T[i..j] = \varepsilon$ when $i > j$.
For a sequence $s_1, \ldots, s_k$ of strings,
let $\prod_{i = 1}^ks_k = s_1 \cdots s_k$ denote their concatenations.

\subsection{String attractors}

Let $T$ be a non-empty string of length $n$.
A set $\Gamma = \{p_1, \ldots, p_{\ell}\}$ of $\ell$ positions in $T$
is said to be a \emph{string attractor}~\cite{KempaP18} of $T$
if any substring $s \in \Substr(T)$ has an occurrence $[i,j]$
such that $s = T[i..j]$ and $i \leq p_k \leq j$ for some $p_k \in \Gamma$.
Since the set $\{1, \ldots, n\}$ of all positions in $T$ clearly satisfies
the definition, any string has a string attractor.
The size of a string attractor is the number of positions in it.
Let $\gamma(T)$ denote the size of the smallest string attractor(s) for $T$.
For instance, the set $\{ 5,7 \}$ of positions is a string attractor of string $T= \mathtt{baaaabbaaa}$
(see also Fig.~\ref{fig:examples}),
and it is the smallest since $T$ contains two distinct characters $\mathtt{a,b}$.

\subsection{Bidirectional macro scheme}
A sequence $f_1, \ldots, f_b$ of non-empty substrings of a string $T$
is said to be a \emph{bidirectional macro scheme} of $T$
if $T = f_1 \cdots f_b$ and each $f_k = T[p_k,p_k+|f_k|-1]$ (called a phrase) is either a single character 
or a copy of a substring $T[q_k,q_k+|f_k|-1]$ (called a source) occurring to the left or right of $f_k$.
If $|f_k| = 1$, then $f_k$ is called a ground phrase.
The function $F:[1,n] \rightarrow [1,n] \cup \{ 0\}$ such that
\begin{alignat*}{2}
  F(0) &= 0 & & \\
  F(p_k) &= 0 & & \text{ if $f_k$ is a grand phrase}, \\
  F(p_k+j) &= q_k + j & & \text{ if $|f_k| > 1$ and $0 \leq j < |f_k|$},
\end{alignat*}
is induced by the bidirectional macro scheme.
Let $F^0(p_k) = p_k $ and $F^m(p_k) = F(F^{m-1}(p_k))$ for $m \leq 1$.
A bidirectional macro scheme $B$ is called valid if there exists an $m \geq 1$ such that $F^m(i)=0$
for every $i \in [1,n]$. The string $T$ can be reconstructed from 
the bidirectional macro scheme if and only if it is valid.
The size of a bidirectional macro scheme $B$ is the number of phrases of $B$.
Let $b(T)$ denote the size of the smallest valid bidirectional macro scheme(s) for $T$.

\subsection{LZ-style compressors}

A sequence $f_1, \ldots, f_z$ of non-empty substrings of a string $T$
is said to be an \emph{LZ-style factorization} of $T$
if $T = f_1 \cdots f_z$ and each $f_k$ is either a fresh character
not occurring to its left, or a copy of a previous substring occurring to its left.

The \emph{overlapping LZSS} factorization of a string $T$
is the greedy LZ-style factorization of $T$
such that each $f_k$ is taken as long as possible.
Hence, each copied phrase $f_k$ in the overlapping LZSS factorization
is the longest prefix of $T[|f_1 \cdots f_{k-1}|+1..n]$
that occurs at least twice in $T[1..|f_1 \cdots f_k|]$.
Let $\lzsssr(T)$ denote the overlapping LZSS factorization of a string $T$
and $\LZsssr(T)$ the number of phrases in $\lzsssr(T)$.

In the \emph{non-overlapping LZSS} factorization,
each copied phrase $f_k$ is the longest prefix of $T[|f_1 \cdots f_{k-1}|+1..n]$
that occurs at least once in $T[1..|f_1 \cdots f_{k-1}|]$.
Let $\lzss(T)$ denote the non-overlapping LZSS factorization of a string $T$
and $\LZss(T)$ the number of phrases in $\lzss(T)$.

An LZ-style factorization $f_1, \ldots, f_z$ of a string $T$ is said to be
an \emph{LZ-End factorization} of $T$ if each copied phrase $f_k$
has a previous occurrence $[i..j]$ in $T[1..|f_1 \cdots f_{k-1}|]$
such that $j$ is the ending position of a previous phrase,
namely $j = |f_1 \cdots f_h|$ for some $1 \leq h < k$.
The \emph{optimal LZ-End factorization} is the LZ-End factorization
whose size is the smallest. Let $\LZEndOpt$ denote the size of 
the optimal LZ-End factorization of string $T$.
The \emph{greedy LZ-End factorization} is the LZ-End factorization
such that each copied phrase $f_k$ is taken as long as possible,
namely, $f_k$ is the longest prefix of $T[|f_1 \cdots f_{k-1}|+1..n]$
that is a suffix of $T[1..|f_1 \cdots f_h|]$ for some $1 \leq h < k$.
Let $\lzend(T)$ denote the greedy LZ-End factorization of a string $T$,
and $\LZEnd(T)$ the number of phrases in $\lzend(T)$.

An LZ-style factorization $f_1, \ldots, f_z$ of a string $T$ is said to be
an \emph{LZ78 factorization} of $T$
if each copied phrase $f_k$ is the shortest prefix of $T[|f_1 \cdots f_{k-1}|+1..n]$ such that $f_k[1..|f_k|-1] = f_h$ for some $1 \leq h < k$ and $f_k \neq f_\ell$ for any $1 \leq \ell < k$.
Let $\lzseveneight(T)$ denote the LZ78 factorization of a string $T$,
and $\LZSevenEight(T)$ the number of phrases in $\lzseveneight(T)$.

Fig.~\ref{fig:examples} shows concrete examples of the aforementioned
string compressors and repetitiveness measures.
\begin{figure}[t]
  \centering
  \includegraphics[keepaspectratio,width = 12cm]{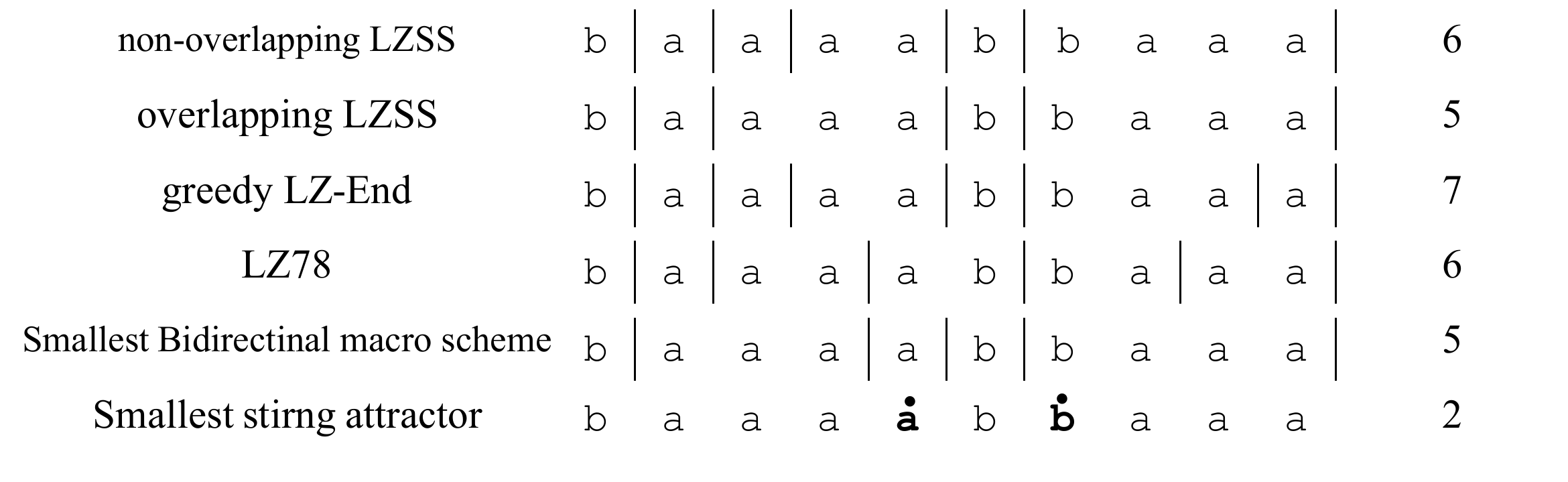}
  \caption{Examples of string compressors/repetitiveness measures for string $T = \mathtt{baaabbaaa}$.}
  \label{fig:examples}
\end{figure}

\subsection{Additive sensitivity of string repetitiveness measures}  
For two strings $S, T$, let $\ed(S,T)$ denote the edit distance
between $S$ and $T$.

Let $c(T)$ denote the size of the output of compressor (such as LZSS) or
a repetitiveness measure (such as the smallest string attractor) for a string $T$ of length $n$.
The worst-case \emph{additive sensitivity},\emph{multiplicative sensitivity} of $c$ is defined by
\begin{eqnarray*}
  \AS{\edit}(c, n) & = \max_{T \in \Sigma^n, T \in \Sigma^m}\{c(T') - c(T) \mid \ed(T') = 1\}, \\
  \MS{\edit}(c, n) & = \max_{T \in \Sigma^n, T \in \Sigma^m}\{c(T') / c(T) \mid \ed(T') = 1\}
\end{eqnarray*}
where $\edit \in \{\sub, \ins, \del\}$
represents the type of the edit operation (substitution, insertion, deletion),
and $m = n$ for $\edit = \sub$,
$m = n+1$ for $\edit = \ins$, and
$m = n-1$ for $\edit = \del$.

\section{Additive Sensitivity of String Attractors}

In this section, we consider the additive sensitivity of
the smallest string attractor size $\gamma$.
The following lower bound is known:
\begin{theorem}[\cite{AKAGI2023104999}] \label{thm:SA_lb}
  $\AS{\edit}(\gamma,n) = \Omega(\sqrt{n})$ holds for $\edit \in \{\sub, \ins, \del\}$.
\end{theorem}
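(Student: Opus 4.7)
The plan is to exhibit, for each sufficiently large $n$, an explicit string $T \in \Sigma^n$ together with a single-character edit whose result $T'$ satisfies $\gamma(T') - \gamma(T) = \Omega(\sqrt{n})$. The lower bound on $\gamma(T')$ will rest on a standard combinatorial lemma: if $T'$ contains $m$ substrings $s_1,\ldots,s_m$, each occurring exactly once in $T'$, whose unique occurrence intervals in $[1,|T'|]$ are pairwise disjoint, then every valid string attractor of $T'$ must include at least one position inside each such interval, so that $\gamma(T') \ge m$. The matching upper bound $\gamma(T) = O(1)$ will be certified by exhibiting a concrete constant-size attractor for $T$.

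The design idea is to take $T = u \cdot u$ to be a doubling of a string $u$ of length $n/2$, engineered so that (i) the concatenation $uu$ admits an $O(1)$-size attractor, and (ii) the first copy of $u$ contains $\Omega(\sqrt{n})$ pairwise-disjoint substrings of length $\Theta(\sqrt{n})$, all passing through a common position $p$. Property (i) exploits the self-similarity between the two halves of $T$, so that one attractor position per ``structural type'' of substring suffices. A single edit at $p$ in the first copy simultaneously alters every such substring, destroying all of their first-copy occurrences; the corresponding occurrences in the untouched second copy of $u$ remain intact and still pairwise disjoint, and each one is now the unique occurrence of its substring in $T'$. Invoking the lemma yields $\gamma(T') = \Omega(\sqrt{n})$, producing the desired additive gap against $\gamma(T) = O(1)$.

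The hard step is jointly realizing properties (i) and (ii): the string $u$ must be periodically redundant enough for $uu$ to admit a constant-size attractor, yet simultaneously carry $\Omega(\sqrt{n})$ long pairwise-disjoint substrings funneling through a single position. A natural family of candidates is a staircase-type $u$ built from $k = \Theta(\sqrt{n})$ blocks of varying lengths over a small alphabet, where each block contributes one signature substring that crosses $p$. Verifying the constant-attractor bound for $uu$ is the technical heart of the argument and will require an explicit covering argument spanning all substring lengths. Finally, adapting the construction from substitution to $\edit \in \{\ins, \del\}$ is routine: the edit position and the block shapes can be tuned so that an inserted or deleted character effects the same simultaneous destruction of the signature substrings as a substituted one, yielding the same $\Omega(\sqrt{n})$ lower bound in all three cases.
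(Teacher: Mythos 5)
First, note that the paper does not prove this statement itself; it is quoted from Akagi et al.~\cite{AKAGI2023104999}, so your attempt has to stand on its own. It does not: the overall shape of your construction is provably impossible. You plan to exhibit $T$ with $\gamma(T)=O(1)$ and a single edit with $\gamma(T')=\Omega(\sqrt{n})$. But a single-character edit changes the substring complexity by at most an additive constant, i.e.\ $\delta(T')\leq \delta(T)+1\leq \gamma(T)+1$, and by the known chain $\gamma(T')\leq z(T')=O(\delta(T')\log (n/\delta(T')))$ one gets $\gamma(T')=O((\gamma(T)+1)\log n)$. Hence if $\gamma(T)=O(1)$ then $\gamma(T')=O(\log n)$, and the additive gap you can create from a constant-attractor string is $O(\log n)$, never $\Omega(\sqrt{n})$. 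Any correct lower-bound family necessarily has $\gamma(T)=\Omega(\sqrt{n}/\log n)$ already; the instances in~\cite{AKAGI2023104999} have $\gamma(T)=\Theta(\sqrt{n})$ and the edit increases $\gamma$ by a constant factor, which is what yields the $\Omega(\sqrt{n})$ additive jump. So ``property (i)'' of your plan must be abandoned, not merely verified as a ``technical heart.''

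There is also an internal inconsistency in ``property (ii)'' that would sink the argument even before the above obstruction: you ask for $\Omega(\sqrt{n})$ substrings whose occurrence intervals in the first copy of $u$ are pairwise disjoint yet all pass through one position $p$, which is contradictory for more than one interval. Under the charitable reading (they all cross $p$, overlapping each other), the doubling $T=uu$ defeats you: each such substring's surviving occurrence after the edit is its shifted image in the second copy, and all of these images contain the single position $p+|u|$. They are therefore not pairwise disjoint, your hitting-set lemma (which is itself correct) does not apply, and in fact one extra attractor position at $p+|u|$ stabs all of them, so $\gamma(T')$ need not grow at all. To make the lemma bite, the destroyed-and-surviving occurrences must be spread out, e.g.\ $\Theta(\sqrt{n})$ signature substrings of length $\Theta(\sqrt{n})$ whose surviving occurrences sit in $\Theta(\sqrt{n})$ disjoint regions of $T'$, which forces the pre-edit string to already have attractor size $\Theta(\sqrt{n})$, consistent with the first paragraph.
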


Together with Theorem~\ref{thm:string-attractor},
we obtain a tight bound $\Theta(\sqrt{n})$ for $\AS{\edit}(\gamma, n)$.

\begin{theorem} \label{thm:string-attractor}
  $\AS{\edit}(\gamma,n) = O(\sqrt{n})$ holds for $\edit \in \{\sub, \ins, \del\}$.
\end{theorem}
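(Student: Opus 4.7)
The plan is to augment a smallest string attractor $\Gamma$ of $T$ with $O(\sqrt{n})$ positions to obtain an attractor $\Gamma'$ of $T'$, so that $\gamma(T') \le |\Gamma'| \le \gamma(T) + O(\sqrt{n})$. Without loss of generality I would handle the insertion case, where $T' = T[1..i{-}1]\cdot c \cdot T[i..n]$; substitution and deletion are analogous.

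First I would form the shifted copy $\tilde{\Gamma} = \{p \in \Gamma : p < i\} \cup \{p+1 : p \in \Gamma,\ p \ge i\}$ of the same size as $\Gamma$, and then add the inserted position $i$ to the attractor under construction. The position $i$ covers every substring of $T'$ that possesses an occurrence containing $i$, so the remaining concern is the substrings $s$ of $T'$ whose occurrences all avoid position $i$. Any such $s$ must be a substring of $T[1..i{-}1]$ or of $T[i..n]$, and in particular a substring of $T$, so $s$ has at least one $\Gamma$-covered occurrence in $T$. If that occurrence does not straddle the edit boundary, then its shifted copy in $T'$ is covered by $\tilde{\Gamma}$; otherwise I call $s$ \emph{problematic} and it must be covered by new positions.

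To cover the problematic substrings I would add an auxiliary set $B$ of $O(\sqrt{n})$ positions, partitioned by the length threshold $L = \lceil\sqrt{n}\rceil$. For problematic substrings of length at least $L$, a grid of $O(n/L) = O(\sqrt{n})$ positions spaced $L$ apart in $[1, n+1]$ is placed in $B$; every occurrence in $T'$ of a string of length $\ge L$ must contain such a grid point. For problematic substrings of length less than $L$, the straddling $\Gamma$-covered occurrence in $T$ is confined to the window $[i-L+1,\, i+L-1]$ of width $O(\sqrt{n})$ around the edit, so the relevant distinct short strings form a structured family that can be targeted by another $O(\sqrt{n})$ positions chosen near the edit.

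The main obstacle will be to show that these short problematic substrings really are simultaneously coverable by $O(\sqrt{n})$ positions, because, while their straddling occurrence in $T$ sits close to the edit, the occurrence in $T'$ witnessing that $s$ is a substring of $T[1..i{-}1]$ or $T[i..n]$ may lie arbitrarily far from $i$ and the grid. The key combinatorial step will be a length-versus-count trade-off tying the distance from the straddling $\Gamma$-position $p$ to $i$ to the length of $s$: since $s$ must span from $p$ across $i$, it has length at least $|p - i|+1$, so a $\Gamma$-position far from $i$ produces only long problematic substrings (already handled by the grid), while the short problematic substrings are produced only by $\Gamma$-positions very close to $i$, of which there can be at most $O(\sqrt{n})$ that matter, enabling a covering of the corresponding far-occurrences by $O(\sqrt{n})$ carefully selected positions.
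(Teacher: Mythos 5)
Your decomposition matches the paper's up to the easy parts: position $i$ covers what the paper calls $S_2$, the (shifted) attractor $\Gamma$ covers substrings having a covered occurrence that avoids the edit (the paper's $S_3$), and an evenly spaced grid of $O(\sqrt{n})$ positions covers all problematic substrings of length at least $\lceil\sqrt{n}\rceil$ (the paper's set $A_1$ in Lemma~\ref{lem:attractor-S1}). However, the remaining case --- the short problematic substrings --- is the crux of the whole proof, and your sketch does not actually handle it. The ``length-versus-count trade-off'' you propose only shows that short problematic substrings arise from $\Gamma$-positions within distance $\lceil\sqrt{n}\rceil$ of $i$; but bounding the number of relevant $\Gamma$-positions by $O(\sqrt{n})$ does not bound the number of distinct strings you must cover, nor does it say how to cover them. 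A single $\Gamma$-position adjacent to $i$ already admits $\Theta(\sqrt{n}\cdot\sqrt{n})=\Theta(n)$ intervals of length below $\lceil\sqrt{n}\rceil$ that contain both it and $i$, hence potentially $\Theta(n)$ pairwise distinct short problematic substrings, and (as you yourself note) their witnessing occurrences in $T'$ can be scattered arbitrarily far from the edit and from the grid. So ``one new position per relevant $\Gamma$-position'' or ``positions chosen near the edit'' cannot work, and the final sentence of your proposal asserts exactly the statement that needs proof.

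The missing idea (the paper's case~2 of Lemma~\ref{lem:attractor-S1}) is to exploit the nesting structure of the straddling occurrences themselves rather than the $\Gamma$-positions: among the occurrences in $T$ that contain the edit position and have length at most $\lceil\sqrt{n}\rceil$, keep only the maximal (non-nested) intervals; since they all contain position $i$ and are pairwise non-nested, there are at most $\lceil\sqrt{n}\rceil$ of them. For each maximal such substring $s$, with $\ell$ the offset of the edit position inside its straddling occurrence, choose an arbitrary occurrence of $s$ starting at some position $j$ in $T'$ and add the single position $j+\ell-1$. This one position simultaneously stabs every shorter problematic substring whose straddling occurrence is nested in that of $s$, because each such substring appears inside the chosen occurrence of $s$ at an interval containing $j+\ell-1$. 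That is what collapses the possibly $\Theta(n)$ distinct short problematic substrings down to $O(\sqrt{n})$ new attractor positions; without it (or an equivalent mechanism), your argument does not reach the claimed $O(\sqrt{n})$ bound.
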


For ease of discussion,
we consider the case of substitution where $\edit = \sub$.
Let $T$ be the original string of length $n$ and 
$T'$ be the string obtained by replacing the $i$-th character $T[i]=a$ with
a distinct character $b~(\neq a)$. 
Namely, $T' = T[1..i-1] \cdot b \cdot T[i+1..n]$.
We consider three subsets $S_1$, $S_2$, $S_3$ of $\Substr(T')$ as follows:
\begin{align*}
  S_1 &= \{ s \in \Substr(T') \mid \text{$s$ has an occurrence in $T$ that contains position $i$} \}, \\
  S_2 &= \{ s \in \Substr(T') \mid \text{$s$ has an occurrence in $T'$ that contains position $i$} \}, \\
  S_3 &= \Substr(T') - (S_1 \cup S_2).
\end{align*}

\begin{lemma} \label{lem:attractor-S1}
  There exists an integer set $A \subseteq [1,n]$ 
  such that $|A| = O(\sqrt{n})$ holds and every $s \in S_1$ has an occurrence in $T'$ that contains some $p \in A$.
\end{lemma}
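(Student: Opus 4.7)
My plan is to parameterise the substrings in $S_1$ as $s_{\ell,r} = T[i-\ell..i+r]$ for $\ell, r \geq 0$, fix the threshold $k^\star = \lceil \sqrt{n} \rceil$, and construct $A$ via a case split on $(\ell,r)$. The starting observation is that any occurrence of $s \in S_1$ in $T'$ must avoid position $i$, because the character of $s$ aligning with $i$ equals $T[i] = a \neq b = T'[i]$. Hence each $s_{\ell,r} \in S_1$ has an occurrence $T'[j..j+\ell+r] = s_{\ell,r}$ with $i \notin [j,j+\ell+r]$, and the ``centered'' position $p := j + \ell$ satisfies $T[p] = a$ together with $T[p-\ell..p+r] = T[i-\ell..i+r]$. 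This $p$ has a key monotonicity property: it remains a valid anchor for every $s_{\ell',r'}$ with $\ell' \leq \ell$ and $r' \leq r$, since the sub-range $[p-\ell',p+r']$ still avoids $i$ and the matching still holds character-by-character.

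The two ``short'' regions are handled directly by this monotonicity. For each $\ell \in \{0, 1, \ldots, k^\star\}$ such that some $s_{\ell,r} \in S_1$ exists, let $r^\star(\ell) = \max\{r : s_{\ell,r} \in S_1\}$ and add to $A$ a centered anchor $p_\ell$ for $s_{\ell,r^\star(\ell)}$; monotonicity then ensures $p_\ell$ covers every $s_{\ell,r} \in S_1$ with this value of $\ell$. A symmetric $r$-sided construction adds anchors $q_r$ for each $r \in \{0, 1, \ldots, k^\star\}$. Together these contribute at most $2(k^\star+1) = O(\sqrt{n})$ anchors and cover every $s_{\ell,r} \in S_1$ with $\min(\ell,r) \leq k^\star$.

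What remains is the ``both-long'' region $C = \{(\ell,r) : \ell, r > k^\star\}$. For any $s_{\ell,r} \in S_1 \cap C$, its centered anchor $p$ forces $T[p-k^\star..p+k^\star] = T[i-k^\star..i+k^\star] =: W$, so $p$ must be a center of some occurrence of the fixed length-$(2k^\star+1)$ string $W$ in $T$. I plan to dichotomise on the number of $W$-occurrences: if $W$ occurs at most $2k^\star+1$ times in $T$, I add every $W$-occurrence center (other than $i$) to $A$, using $O(\sqrt{n})$ anchors to trivially cover the whole both-long region; otherwise, pigeonholing more than $2k^\star+1$ occurrences into $[1,n]$ forces two of them within distance at most $\sqrt{n}/2 < k^\star$, so by Fine--Wilf the string $W$, and hence the surrounding window in $T$, has period $\pi < k^\star$. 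In this periodic case I will pick $O(\sqrt{n})$ representative anchors along the period lattice inside the maximal periodic run containing $[i-k^\star,i+k^\star]$ and argue that their dominance rectangles (constrained so that the anchor range misses $i$) cover every remaining $(\ell,r) \in S_1 \cap C$. Formalising this last periodic case --- in particular bounding the Pareto antichain of achievable left/right extensions along the period lattice by $O(\sqrt{n})$ and controlling the non-periodic tails at the ends of the run --- is the main technical hurdle.
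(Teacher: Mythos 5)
There are two genuine problems. First, the claim your whole construction rests on --- that any occurrence in $T'$ of a string $s \in S_1$ must avoid position $i$ --- is false: an occurrence of $s$ in $T'$ containing $i$ need not align the edited offset of $s$ with position $i$; it only needs \emph{some} character of $s$ equal to $b$ there. Concretely, take $T=\mathtt{aab}$, $i=2$, $T'=\mathtt{abb}$ and $s=\mathtt{ab}$: then $s\in S_1$ (it occurs in $T$ at $[2,3]$, covering $i$), yet its unique occurrence in $T'$ is $[1,2]$, which contains $i$. Hence the ``centered anchor'' $p=j+\ell$ that you need for $s_{\ell,r^\star(\ell)}$, for the symmetric $q_r$'s, and throughout the both-long region need not exist. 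The flaw is repairable --- add $i$ itself to $A$, note that any $s$ all of whose $T'$-occurrences contain $i$ is then covered, and restrict your machinery to strings having an $i$-avoiding occurrence --- but as written the anchor-existence step fails.

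Second, the case $\ell,r>k^\star$ is only a plan: you acknowledge that the periodic subcase (Fine--Wilf, period lattice, Pareto antichain) is not carried out, so the proof is incomplete exactly where it is hardest. Moreover, that machinery is unnecessary: if $\ell,r>k^\star$ then $|s|>2\sqrt{n}$, so \emph{every} occurrence of $s$ in $T'$ contains a point of a fixed set of $O(\sqrt{n})$ positions whose consecutive gaps are at most $\lceil\sqrt{n}\rceil$; no anchor tailored to $s$, and no case distinction on the occurrences of $W$, is needed. This is exactly how the paper treats all $s$ with $|s|\geq 1+\lceil\sqrt{n}\rceil$, while for short strings ($|s|\leq\lceil\sqrt{n}\rceil$) it observes that the occurrences containing $i$ in $T$ that are maximal (non-nested) number at most $\lceil\sqrt{n}\rceil$, and it places one anchor inside an arbitrary $T'$-occurrence of each such maximal string at the offset of the edited position, which also stabs all nested ones --- essentially your monotonicity idea, but formulated so that the chosen $T'$-occurrence is never required to avoid $i$.
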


\begin{proof}
  We consider two cases w.r.t. the length of $s \in S_1$.
  \begin{enumerate}
    \item Assume that $|s| \geq 1 + \lceil \sqrt{n} \rceil $.
      Let 
      \[
        A_1 = \left\{ \lfloor\sqrt{n} \rfloor , 2 \lfloor \sqrt{n} \rfloor, \dots, \lfloor\sqrt{n} \rfloor*\lfloor\sqrt{n} \rfloor \right\} 
        \cup \left\{ \left\lfloor \frac{{\lfloor \sqrt{n} \rfloor}^2 + n}{2} \right\rfloor \right\}.
      \]
      We can see that $n - \left\lfloor \frac{{\lfloor \sqrt{n} \rfloor}^2 + n}{2} \right\rfloor \leq 1 + \sqrt{n}$,
      $\left\lfloor \frac{{\lfloor \sqrt{n} \rfloor}^2 + n}{2} \right\rfloor - {\lfloor\sqrt{n} \rfloor}^2 \leq 1 + \sqrt{n}$, and
      $k \lfloor \sqrt{n} \rfloor -(k-1)\lfloor \sqrt{n} \rfloor = \lfloor \sqrt{n} \rfloor \leq \sqrt{n}$ for every $k \in [1,\lfloor \sqrt{n} \rfloor]$.
      These facts imply that every $s$ has an occurrence in $T$ that contains some $p \in A_1$.
      It is clear that $|A_1| \leq \lfloor \sqrt{n} \rfloor + 1= O(\sqrt{n})$ holds.

    \item Assume that $|s| \leq \lceil \sqrt{n} \rceil$.
      We consider a subset $S_1' \subseteq S_1$ 
      such that $|s| \leq \lceil \sqrt{n} \rceil$ holds and the occurrence (as an interval) of $s$ is not nested 
      by an occurrence of any other elements in $S_1'$.
      From such structures, we can see that $|S_1'| \leq \lceil \sqrt{n} \rceil$.
      Let $s \in S_1'$ and $\ell$ be the position in $s$ that corresponds to the edit position.
      We choose an occurrence $j$ of $s$ in $T'$ and also consider a position $j+\ell-1$.
      We can see that this position stabs $s$ and also shorter substrings $s' \notin S_1'$.
      Thus a set $A_2$ of such positions for every $s \in S_1'$ covers an occurrence of each substring in this case.
      It is clear that $|A_2| \leq \lceil \sqrt{n} \rceil = O(\sqrt{n})$ holds.
    \end{enumerate}
    From the above discussion, $A = A_1 \cup A_2$ satisfies the statement.
\end{proof}

We show the following lemma.

\begin{lemma} \label{lem:construct-gamma}
  For any strings $T$ of length $n$, 
  there exists a string attractor of $T'$ of size $\gamma(T)+O(\sqrt{n})$.
\end{lemma}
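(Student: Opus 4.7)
The plan is to construct a string attractor of $T'$ as the union of three sets: (i) a smallest string attractor $\Gamma$ of $T$, viewed as positions in $T'$ (for substitution, $|T'|=n$, so positions carry over directly); (ii) the set $A$ produced by Lemma~\ref{lem:attractor-S1}; and (iii) an analogous set $A'$ of $O(\sqrt{n})$ positions such that every $s \in S_2$ has an occurrence in $T'$ containing some point of $A'$. The existence of $A'$ is obtained by replaying the proof of Lemma~\ref{lem:attractor-S1} with the roles of $T$ and $T'$ interchanged: the grid $A_1' = \{\lfloor\sqrt n\rfloor, 2\lfloor\sqrt n\rfloor, \ldots\} \cup \{\lfloor(\lfloor\sqrt n\rfloor^2+n)/2\rfloor\}$ again handles all long substrings, because it covers any substring of length $\ge 1+\lceil\sqrt n\rceil$ in any length-$n$ string, and the same non-nested-occurrence argument around position $i$ in $T'$ handles short substrings in $S_2$.

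Set $\Gamma' = \Gamma \cup A \cup A'$; the bound $|\Gamma'| \le \gamma(T) + O(\sqrt{n})$ is then immediate. It remains to verify that $\Gamma'$ is a string attractor of $T'$, i.e., that every $s \in \Substr(T') = S_1 \cup S_2 \cup S_3$ has an occurrence in $T'$ stabbed by some $p \in \Gamma'$. For $s \in S_1$, Lemma~\ref{lem:attractor-S1} directly supplies such an occurrence with $p \in A$. For $s \in S_2$, the symmetric statement supplies one with $p \in A'$.

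The interesting case is $s \in S_3$. By definition, $s$ has no occurrence in $T'$ containing $i$, so every occurrence of $s$ in $T'$ lies entirely inside $T'[1..i-1] = T[1..i-1]$ or $T'[i+1..n] = T[i+1..n]$; hence such an occurrence is also an occurrence of $s$ in $T$, and in particular $s \in \Substr(T)$. Because $\Gamma$ attracts $T$, there exist $p \in \Gamma$ and $j \le p \le j+|s|-1$ with $T[j..j+|s|-1] = s$. Since $s \notin S_1$, no occurrence of $s$ in $T$ can contain position $i$, so $i \notin [j,j+|s|-1]$; therefore $T[j..j+|s|-1] = T'[j..j+|s|-1] = s$, and $p \in \Gamma \subseteq \Gamma'$ stabs this occurrence inside $T'$.

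The insertion and deletion cases follow the same blueprint with only routine index bookkeeping (the length of $T'$ changes by $\pm 1$, and positions in $\Gamma$ to the right of the edit site must be shifted by $\pm 1$); neither the grid argument nor the non-nested-occurrence argument is affected. The main obstacle is precisely the third case: one really needs the three-way decomposition $\Substr(T') = S_1 \cup S_2 \cup S_3$ so that, for $s \in S_3$, the combined conditions $s \notin S_1$ and $s \notin S_2$ together force the $\Gamma$-stabbed occurrence of $s$ in $T$ to avoid the edit site, which is what permits its transport from $T$ to $T'$ unchanged.
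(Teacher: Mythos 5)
Your proposal is correct and follows essentially the same route as the paper: decompose $\Substr(T')$ into $S_1 \cup S_2 \cup S_3$, take $\Gamma' = \Gamma \cup A \cup (\text{extra positions})$, and transport the $\Gamma$-stabbed occurrence of each $s \in S_3$ from $T$ to $T'$ using exactly the observation that $s \notin S_1$ forces every occurrence of $s$ in $T$ to avoid the edit position. The only difference is your treatment of $S_2$: since every $s \in S_2$ by definition has an occurrence in $T'$ containing position $i$, the paper simply adds the single position $i$, so your symmetric $O(\sqrt{n})$-size set $A'$ is sound but unnecessarily large (and note that the literal role-swap must still choose the stabbed occurrences inside $T'$, which your crossing-occurrence argument does ensure).
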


\begin{proof}
  Let $\Gamma$ be a string attractor of $T$.
  We show that there exists a string attractor $\Gamma'$ of $T'$ whose size is $|\Gamma|+O(\sqrt{n})$.
  By Lemma~\ref{lem:attractor-S1}, positions in $A$ that is defined in Lemma~\ref{lem:attractor-S1} cover substrings in $S_1$.
  It is easy to see that position $i$ stabs all substrings in $S_2$.
  For any substring in $S_3$, it has an occurrence in both strings that contain a position in $\Gamma$.
  Thus $\Gamma$ covers substrings in $S_3$.
  Overall, there exists a string attractor $\Gamma' = \Gamma \cup A \cup \{i\}$ of size $|\Gamma|+O(\sqrt{n})$.
\end{proof}

Lemma~\ref{lem:construct-gamma} immediately derives Theorem~\ref{thm:string-attractor}.

\section{Additive Sensitivity of Bidirectional Macro Scheme}
In this section, we consider the additive sensitivity of
the smallest bidirectional macro scheme size $b$.
Akagi et al.~\cite{AKAGI2023104999} showed the following lower bounds for each edit operation.
\begin{theorem}[\cite{AKAGI2023104999}] \label{BMS_lb}
  $\AS{\edit}(b,n) = \Omega(\sqrt{n})$ holds for $\edit \in \{\sub, \ins, \del\}$.
\end{theorem}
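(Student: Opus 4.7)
The plan is to exhibit, for every sufficiently large $n$, a string $T$ of length $n$ together with a single-character edit producing $T'$ for which $\bs(T') - \bs(T) = \Omega(\sqrt{n})$. My first attempt would use a two-scale family: pick $k = \Theta(\sqrt{n})$ and set $T = u_0 u_1 \cdots u_{k-1}$, where each block $u_i$ has length $\Theta(k)$ and is obtained from $u_{i-1}$ by a simple, local perturbation, for instance the ``sliding marker'' pattern $u_i = a^{i+1} b a^{k-i} c$. The whole string has length $\Theta(k^2) = \Theta(n)$, and a bidirectional scheme of size $O(k)$ can be exhibited by encoding each $u_i$ as a constant number of phrases whose sources lie inside $u_{i-1}$, plus a constant number of ground phrases for the alphabet.

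Next I would design the edit so that it simultaneously destroys the canonical copy pattern used by $\Omega(k)$ later blocks. A natural candidate is to flip the marker character inside $u_0$ (or a similarly ``global'' position): if the $u_i$ are tuned so that $u_0$ is the unique supplier of a short critical substring appearing in every $u_i$, then each later block in $T'$ is forced to find a different, locally distinguishable source. The upper bound $\bs(T) = O(\sqrt{n})$ is immediate once such a decomposition is in place; the challenge is the matching lower bound on $\bs(T')$.

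The hard part is lower bounding $\bs(T') \geq \bs(T) + \Omega(\sqrt{n})$, because bidirectional macro schemes permit non-hierarchical, long-range reuse and are not easily constrained by local structural arguments. My main technical step would be a charging argument: associate with every valid bidirectional scheme of $T'$ a dependency chain from each block to its ground phrases, and show that the edit severs $\Omega(k)$ previously shared chains, forcing at least that many additional phrases that cannot be amortized onto existing sources. If the direct argument turns out to be unwieldy, a clean backup route is to tune the family so that $\gamma(T') - \gamma(T) = \Omega(\sqrt{n})$ (leveraging the construction underlying Theorem~\ref{thm:SA_lb}) while simultaneously ensuring $\bs(T) = \gamma(T) + O(1)$; combining these with the universal inequality $\bs(T') \geq \gamma(T')$ then yields the desired additive gap. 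The same family should extend to the insertion and deletion cases with only cosmetic changes in the edit placement, delivering the bound uniformly for $\edit \in \{\sub, \ins, \del\}$.
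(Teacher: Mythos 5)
You should first note that the paper does not prove this statement at all: Theorem~\ref{BMS_lb} is quoted from Akagi et al.~\cite{AKAGI2023104999}, so your text has to stand on its own, and as written it is a plan rather than a proof. The decisive step is a lower bound on $\bs(T')$, and neither of your routes delivers it. Your primary ``sliding marker'' construction is in fact likely to fail outright: in a bidirectional macro scheme a block $u_i$ may take its sources from $u_{i-1}$ \emph{or} $u_{i+1}$ (copying from the right is permitted), and the blocks $u_1,\ldots,u_{k-1}$ supply one another's critical substrings irrespective of what happens inside $u_0$, so there is no ``unique supplier.'' An edit inside $u_0$ can then be absorbed by splitting $O(1)$ phrases, giving $\bs(T')-\bs(T)=O(1)$ for that family rather than $\Omega(\sqrt{n})$. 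The proposed charging argument over ``dependency chains'' is precisely the notoriously hard part of lower-bounding $b$; nothing in the sketch excludes schemes that reroute every chain through the surviving blocks, and no invariant is identified that a valid scheme of $T'$ must violate.

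Your backup route is the right idea in spirit: lower-bound $\bs(T')$ by $\gamma(T')$ via the universal inequality $\gamma \leq \bs$, and upper-bound $\bs(T)$ by exhibiting an explicit valid scheme for the unedited string; this is essentially how such additive lower bounds are obtained in~\cite{AKAGI2023104999}. But as stated it is contingent on two unverified facts: that a family realizing the $\gamma$ blow-up of Theorem~\ref{thm:SA_lb} (or a tuned variant of it) admits a valid bidirectional macro scheme of $T$ of size small enough that $\gamma(T')-\bs(T)=\Omega(\sqrt{n})$, and that the tuning does not destroy the $\gamma(T')$ lower bound. You exhibit no concrete family, no explicit scheme for $T$ (whose validity, i.e.\ the condition that $F^m(i)=0$ for all positions, must be checked), and no computation of $\gamma(T')$. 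Until those three pieces are carried out for a fixed family, simultaneously and for all three edit operations, the claim remains unproved.
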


Together with Theorem~\ref{bms_ub},
we obtain a tight bound $\Theta(\sqrt{n})$ for $\AS{\edit}(b, n)$.
\begin{theorem} \label{bms_ub}
  $\AS{\edit}(b,n) = O(\sqrt{n})$ holds for $\edit \in \{\sub, \ins, \del\}$.
\end{theorem}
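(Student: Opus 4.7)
The plan is to mirror the strategy used in the proof of Theorem~\ref{thm:string-attractor}. Given the smallest valid bidirectional scheme $B$ of $T$, I would construct a valid scheme $B'$ of $T'$ by inserting $O(\sqrt{n})$ new phrases on top of $B$ and modifying the sources of some existing phrases when needed. I focus on the substitution case $\edit = \sub$; insertions and deletions can be handled analogously because the corresponding variant of Lemma~\ref{lem:attractor-S1} for those edit types still yields a covering set of size $O(\sqrt{n})$.

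Let $A$ be the set from Lemma~\ref{lem:attractor-S1} and set $A' = A \cup \{i\}$, so $|A'| = O(\sqrt{n})$. I first refine $B$'s phrase partition by inserting phrase boundaries at every position in $A'$, turning each $p \in A'$ into a singleton, and declare each such singleton a ground phrase carrying the character $T'[p]$; this adds at most $2|A'| = O(\sqrt{n})$ phrases, so the total remains $b(T) + O(\sqrt{n})$. For each remaining non-ground piece $g$, which originated from some phrase $f_k$ of $B$, I assign a source in $T'$ as follows: if $f_k$'s source in $T$ avoids position $i$, I reuse the corresponding sub-interval of $f_k$'s source, which is a valid copy in $T'$ because $T$ and $T'$ agree outside $i$; otherwise the content of $g$ belongs to $S_1$ (it occurs in $T$ across position $i$ via $f_k$'s source), so Lemma~\ref{lem:attractor-S1} provides an occurrence in $T'$ containing some $p \in A$, which I take as $g$'s new source.

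By construction $B'$ reproduces $T'$ phrase by phrase, so the main remaining question is validity: the induced function $F'$ must be acyclic. Ground phrases at $A'$ terminate in one step; chains from pieces with unchanged sources follow $B$'s originally acyclic chains, possibly rerouted through the new ground phrases when they would have passed through $A'$; and each reassigned piece's new source contains a ground position $p \in A$, which kills the corresponding chain in one step from that position. The main obstacle will be ruling out cycles among reassigned pieces whose new sources reference one another. The plan is to totally order the reassigned pieces (for instance, by target position) and, among the occurrences guaranteed by Lemma~\ref{lem:attractor-S1}, pick one that is disjoint from the current piece's target, using a short-period argument to split the piece into $O(1)$ additional ground phrases in the degenerate overlap cases. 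With this ordering-based validity argument in hand, the bound $b(T') \le b(T) + O(\sqrt{n})$ follows immediately, matching Theorem~\ref{BMS_lb}.
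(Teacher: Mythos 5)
There is a genuine gap, and it sits exactly where you flag it: the validity of the scheme you build for $T'$. Lemma~\ref{lem:attractor-S1} only guarantees, for each reassigned piece $g$, \emph{some} occurrence of $g$'s content in $T'$ containing \emph{some} $p \in A$; it gives you no freedom to pick an occurrence disjoint from $g$'s target (the unique such occurrence may well overlap it), and the fact that one position of the new source is a ground phrase says nothing about the chains starting at the other positions of that source. Those positions land in other pieces, including other reassigned pieces, and cycles in the induced function $F'$ can pass through arbitrarily long chains of pieces, so pairwise target/source disjointness plus an ordering by target position does not rule them out. Moreover, even granting the ``degenerate overlap'' fix, you never bound how often it fires: the number of phrases whose sources contain position $i$ can be $\Theta(b(T))$, which may be far larger than $\sqrt{n}$, so ``$O(1)$ extra ground phrases per degenerate case'' does not obviously sum to $O(\sqrt{n})$. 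Also note that your claim that pieces with unchanged sources ``follow $B$'s originally acyclic chains'' is not quite right: once any piece is re-sourced, every chain passing through its target is rerouted, so termination must be argued for the new function $F'$ globally, not inherited from $B$.

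The paper avoids building validity from scratch. It starts from the smallest valid scheme of $T$ and applies the already-proven transformation of Akagi et al.\ (Proposition~\ref{bms_ms_ub}), which yields a \emph{valid} scheme of $T'$ by dividing each phrase into $O(1)$ pieces depending on whether the phrase or its source contains $i$; the new contribution is purely a sharper count of case~(3): phrases of length at least $\sqrt{n}$ number $O(\sqrt{n})$ since phrases are disjoint; among the shorter ones, those whose sources (all containing $i$) are nested inside another phrase's occurrence can simply be re-sourced without division, and the pairwise non-nested sources containing the single position $i$ with length below $\sqrt{n}$ number $O(\sqrt{n})$. If you want to salvage your attractor-based route, you would need to prove an analogue of Lemma~\ref{lem:attractor-S1} that additionally controls \emph{where} the covering occurrences lie (or argue termination of $F'$ by an explicit well-founded measure), which is precisely the content your sketch defers.
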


To prove Theorem~\ref{bms_ub}, we use the proof of the multiplicative sensitivity of 
bidirectional macro scheme by Akagi et al \cite{AKAGI2023104999}.
Let $T$ be a string and $T'$ the string obtained by editing the $i$-th character of $T$,
and $f_1, \ldots, f_b$ be a valid bidirectional macro scheme of $T$. 
They showed how to construct a valid bidirectional macro scheme of $T'$ from $f_1, \ldots, f_b$
and the increasing number of phrases in the operations.
We denote the beginning position of phrase $f_j$ by $p_j$ (i.e., $f_j = T[p_j..p_j+|f_j|-1]$)
and its beginning position of source by $q_j$ (i.e., $T[q_j..q_j+|f_j|-1]$).
Akagi et al. considered the three types of operations.
\begin{proposition}[Theorem~{11} of \cite{AKAGI2023104999}] \label{bms_ms_ub}
  A valid bidirectional macro scheme of $T'$ can be obtained 
  by applying the following operations to $f_1, \ldots, f_b$.
  \begin{itemize}
    \item[(1)] If $i \in [p_j,p_j+|f_j|-1]$, we divide the phrase $f_j$ into at most five phrases.  
    \item[(2)] If $i \notin [p_j,p_j+|f_j|-1]$ and $i \notin [q_j,q_j+|f_j|-1]$, we don't need any changes.
    \item[(3)] If $i \notin [p_j,p_j+|f_j|-1]$ and $i \in [q_j,q_j+|f_j|-1]$, 
    we divide the phrase $f_j$ at most three phrases.
  \end{itemize} 
\end{proposition}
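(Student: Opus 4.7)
The plan is to modify the given BMS $f_1,\dots,f_b$ of $T$ phrase by phrase, replacing each $f_j$ by at most five valid phrases of $T'$ according to its case, and then verify that the resulting factorization is a valid BMS of $T'$. I focus on the substitution case; insertion and deletion follow from analogous arguments with index shifts to account for the length change of the string.

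For Case (2), both $[p_j,p_j+|f_j|-1]$ and $[q_j,q_j+|f_j|-1]$ avoid position $i$, so $T$ and $T'$ agree on both intervals and $f_j$ is already a valid phrase of $T'$ with the same source, requiring no change.

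For Case (3), the phrase content of $f_j$ is unchanged in $T'$ but its source is modified at position $i$. I split $f_j$ at the phrase position $p_j+(i-q_j)$, whose source image is $i$, producing a prefix of source $T'[q_j..i-1]$, a single character at position $p_j+(i-q_j)$ taken as a ground phrase, and a suffix of source $T'[i+1..q_j+|f_j|-1]$. Since both flanking sources lie in regions untouched by the edit, the copies remain valid; at most three phrases result, with the prefix or suffix possibly being empty.

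For Case (1), I isolate up to two \emph{critical} phrase positions: position $i$ itself, since $T'[i]\neq T[i]$ forces a break there; and, only if additionally $i\in[q_j,q_j+|f_j|-1]$, the position $p_j+(i-q_j)$ whose source image is $i$. Splitting $f_j$ around these at most two critical points produces at most five pieces: at most three sub-copies whose sources are restrictions of the original source $[q_j,q_j+|f_j|-1]$ to sub-intervals disjoint from $i$ (hence unchanged in $T'$), together with at most two single-character phrases taken as ground phrases. Each sub-copy is a valid phrase of $T'$ by construction.

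Validity of the new scheme follows by inspecting the induced function $F'$: on positions inside unchanged sub-copies, $F'$ agrees with the restriction of the original $F$ to regions avoiding $i$, while every position that in the old scheme would have been mapped through $i$ now sits inside a newly introduced ground phrase and is sent to $0$ in one step. Since $F$ terminates from every position in $[1,n]$ and every potential reference crossing position $i$ has been grounded, $F'$ terminates as well. The main technical obstacle is the Case (1) subcase in which phrase and source straddle position $i$ simultaneously: one must verify that the two critical positions are correctly ordered inside $[p_j,p_j+|f_j|-1]$, that the three resulting sub-intervals are pairwise disjoint and avoid $i$, and that the corresponding source intervals in $T'$ coincide with the claimed unchanged regions. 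Once this bookkeeping is carried out, applying the case analysis to every $f_j$ yields a valid BMS of $T'$ whose size exceeds $b$ by at most a bounded multiple of the number of phrases touched by the edit, establishing the proposition.
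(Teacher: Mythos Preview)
The paper does not give its own proof of this proposition: it is quoted verbatim as Theorem~11 of Akagi et al.\ and used as a black box in the subsequent additive-sensitivity argument. There is therefore nothing in the present paper to compare your attempt against.

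That said, your sketch is correct and is precisely the natural construction one expects (and, from the paper's surrounding discussion, the one in the cited reference): ground position $i$ together with every phrase position whose $F$-image is $i$, keep all other pointers unchanged, and observe that the new induced function $F'$ is obtained from $F$ by resetting finitely many entries to $0$, so every termination chain can only shorten. The phrase counts then follow immediately from the number of critical positions you isolate: one critical point in Case~(3) gives at most three pieces, and at most two critical points in Case~(1) give at most five. Your identification of the delicate subcase (phrase and source both containing $i$, forcing two distinct critical positions since $p_j\neq q_j$) is exactly the point where the ``five'' arises.
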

In case (1), the number of phrases increases by at most two, and in case (2), the number of factor remains unchanged.
Here, we give an upper bound of increasing number of phrases in case (3) w.r.t. $n$.
First, we consider the number of phrases in case (3) whose length is at least $\sqrt{n}$.
Then, such phrases can occur at most $O(\sqrt{n})$ times in $T$, since $|T|= n$.
Hence the total increasing number of phrases from such phrases is $O(\sqrt{n})$, 
since each phrase is divided into at most three phrases by Proposition~\ref{bms_ms_ub}.
Next, we consider the number of phrases whose length is smaller than $\sqrt{n}$.
Let $f_k$ be a phrase in case (3) such that there exists a phrase $f_{k'}$ that is a superstring of $f_k$ ($k' \neq k$).
For such phrase $f_k$, a corresponding substring of $f_{k'}$ can be a new source of $f_k$ 
even if the original source of $f_k$ is lost by editing.
Thus, in this case, it is unnecessary to divide $f_k$.
On the other hand, the maximum number of phrases in case (3) whose sources do not nest each other is $O(\sqrt{n})$
since the sources contain the edited position $i$.
For such $O(\sqrt{n})$ phrases, we divide them into at most three phrases by Proposition~\ref{bms_ms_ub}.
Hence, the total increasing number of phrases from such phrases is also $O(\sqrt{n})$.

Overall, the total increasing number of phrases is $O(\sqrt{n})$,
and we obtain $\AS{\edit}(b,n) = O(\sqrt{n})$.

\section{Additive Sensitivity of LZSS and LZ-End}

In this section, we consider the additive sensitivity of the size of LZ-style factorization.
First, we describe LZ-style factorization, excluding LZ78. 
Specifically, they are overlapping LZSS, non-overlapping LZSS,
optimal LZ-End factorization,and greedy LZ-End factorization.

\subsection{Lower bounds}
For any function $c:\Sigma^n \rightarrow \{ 1, \ldots ,n \} $,
the following theorem holds:

\begin{theorem} \label{LZ_lb}
  If $\LZsssr(T) \leq c(T) \leq \LZEnd(T)$ for any string $T \in \Sigma^+$, 
  then $\AS{\edit}(c,n) $ $\in \Omega(n^{\frac{2}{3}})$ holds
  for $\edit \in \{\sub, \ins, \del\}$.
\end{theorem}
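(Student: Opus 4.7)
The plan starts from a sandwich reduction: under the hypothesis $\LZsssr(T) \leq c(T) \leq \LZEnd(T)$, every single-character edit $T\mapsto T'$ satisfies $c(T') - c(T) \geq \LZsssr(T') - \LZEnd(T)$. Hence it suffices to exhibit, for each sufficiently large $n$, a single pair $(T,T')$ with $|T|=n$, $\ed(T,T')=1$, and $\LZsssr(T') - \LZEnd(T) = \Omega(n^{\frac{2}{3}})$. Such a witness simultaneously establishes the $\Omega(n^{\frac{2}{3}})$ bound for every measure lying between $\LZsssr$ and $\LZEnd$, namely $\LZss$, $\LZ$, $\LZsr$, $\LZEndOpt$ and $\LZEnd$ itself. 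I would handle $\edit=\sub$ first, and reduce $\ins$ and $\del$ to it by attaching a short sentinel, at the cost of only an $O(1)$ additive overhead.

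With $m = \Theta(n^{\frac{1}{3}})$, the candidate witness is
\[
  T \;=\; H \cdot R_1 R_2 \cdots R_m,
\]
where $H$ is a carefully designed ``template'' of length $\Theta(n)$ and each $R_i$ is a length-$m$ substring of $H$, chosen so that: (i) the greedy LZ-End factorization of $H$ has size $O(m)$; (ii) every $R_i$ has a unique occurrence in $H$, and these $m$ occurrences all cross a single common distinguished position $p$; (iii) each $R_i$ is realized as exactly one greedy LZ-End phrase of $T$ by being a suffix of a prior phrase-concatenation prefix. Together (i)--(iii) yield $\LZEnd(T) = O(m) = O(n^{\frac{1}{3}})$. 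The edit then substitutes $T[p]$ by a fresh symbol not occurring elsewhere. After the edit, the unique $H$-occurrence of every $R_i$ is simultaneously destroyed; if moreover the $R_i$'s are pairwise ``incoherent,'' meaning no length-$\Omega(m)$ prefix of $R_j$ occurs in $R_1\cdots R_{j-1}$, then the overlapping LZSS parser on $T'$ is forced to use only $O(1)$-length phrases within each $R_i$, contributing $\Omega(m)$ phrases per block and $\Omega(m^2) = \Omega(n^{\frac{2}{3}})$ in total.

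The main obstacle is reconciling two competing demands on $H$: it must be compressible enough for greedy LZ-End to close in $O(n^{\frac{1}{3}})$ phrases, yet combinatorially rich enough to host $m$ pairwise-``dissimilar'' distinct length-$m$ substrings, all crossing a single common position $p$. The obvious candidates (purely periodic words, Sturmian words, de Bruijn strings) either fail to compress under greedy LZ-End down to $O(n^{\frac{1}{3}})$ phrases, or do not offer the phrase-boundary alignment needed for (iii); moreover, overlapping LZSS is powerful enough to recover a long match inside $R_1\cdots R_m$ as soon as the $R_i$'s share any nontrivial common structure, collapsing the gap. The decisive technical work I anticipate is to specify an explicit $H$---likely obtained by inflating a small seed through a carefully chosen code so as to force unique $H$-occurrences of the $R_i$'s through a shared bottleneck character---that simultaneously meets (i)--(iii), and to rigorously verify the absence of length-$\Omega(m)$ matches among the $R_i$'s in $T'$. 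Controlling the overlapping LZSS parser on the reference part is precisely where the tight $\Omega(n^{\frac{2}{3}})$ exponent comes from, and is where I expect the bulk of the proof's combinatorial effort to concentrate.
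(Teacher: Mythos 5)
There is a genuine gap, and it is quantitative, not merely a matter of unfinished construction work. Your first step (reducing to a single witness pair with $\LZsssr(T') - \LZEnd(T) = \Omega(n^{\frac{2}{3}})$) coincides with the paper's, but the parameter regime you then aim for is provably unattainable. You require $\LZEnd(T) = O(m) = O(n^{\frac{1}{3}})$ while hoping for $\LZsssr(T') = \Omega(n^{\frac{2}{3}})$. Since $\LZsssr(T) \leq \LZEnd(T)$ and the overlapping LZSS size changes by at most a constant multiplicative factor under a single edit (Akagi et al., as quoted in the paper: multiplicative sensitivity at most $3$), you get $\LZsssr(T') = O(\LZsssr(T)) = O(\LZEnd(T)) = O(n^{\frac{1}{3}})$, so your gap is $O(n^{\frac{1}{3}})$, far short of $\Omega(n^{\frac{2}{3}})$. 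Any witness for this theorem must have $\LZEnd(T) = \Omega(n^{\frac{2}{3}})$ itself; making $T$ extremely compressible is self-defeating, so the ``competing demands'' you identify as the main obstacle cannot be reconciled even in principle.

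Independently, the claimed per-block blow-up of $\Omega(m)$ phrases fails. A substitution at the single position $p$ of $H$ destroys only occurrences crossing $p$: for each window $R_i = H[s..s+m-1]$ containing $p$, the pieces $H[s..p-1]$ and $H[p+1..s+m-1]$ still occur verbatim in the edited template, so overlapping LZSS parses each appended copy of $R_i$ into $O(1)$ phrases (one phrase up to the position aligned with $p$, $O(1)$ around it, one after), never $\Omega(m)$. Moreover, there are only $m$ length-$m$ windows crossing a fixed position, so your $R_i$'s are forced to be the sliding windows; consecutive ones share $m-1$ characters, and these shared substrings sit in the appended blocks, which the edit does not touch, so LZSS also copies long matches between blocks --- your ``incoherence'' requirement is unsatisfiable. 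The paper's construction works in the opposite regime: it uses $\Theta(n^{\frac{2}{3}})$ blocks of length $O(n^{\frac{1}{3}})$, namely $\#_i B(l_i)xA(r_i)$ over all pairs $(l_i,r_i) \in [1,p]^2$ with $n = \Theta(p^3)$, each block being a single greedy LZ-End phrase of $T$ whose source crosses the edited position, and the pairs are enumerated so that after the edit each block needs exactly one extra LZSS phrase (matching $B(l_i)xA(r_i-1)$ or $B(l_i)$ from an earlier block). The $\Omega(n^{\frac{2}{3}})$ gap comes from the number of blocks, each losing only $O(1)$ compression, not from any single region blowing up by $\Omega(m)$.
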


\begin{proof}
  Let $T$ be any string of length $n$,
  and $T'$ any string such that $\ed(1,1)=1$.
  By the assumption, $c(T') - c(T) \geq \LZsssr(T') - \LZEnd(T)$ holds. 
  Therefore, to prove $\mathrm{AS}(z,n) \in \Omega(n^{\frac{2}{3}})$ for measure $c$, 
  it is sufficient to show that there exist strings $T,T'$ that satisfy
  $\LZsssr(T') - \LZEnd(T) \in \Omega(n^{\frac{2}{3}})$.

  To demonstrate this, we define the following sequence:
  Let $p \geq 2$ be an integer.
    We define the sequence $\mathsf{Pair}(p)$ of pairs $(l_i,r_i)$ such that
    \[
    (l_{i} ,r_{i}) = 
      \begin{cases}
          (1, 1) & \text{if $i = 1$}, \\
          (l_{i-1}-1,r_{i-1}+1)  & \text{if $l_{i-1} \neq 1$ and $r_{i-1} \neq p$}, \\
          (l_{i-1}+r_{i-1}+1 , 1) & \text{otherwise},
      \end{cases}
    \]
  and $l_i + r_i \leq 2p$ for any $i$. 
  Intuitively, $\mathsf{Pair}(p)$ enumerates all pairs $(l_i, r_i) \in [1,p]^2$ such that $l_i + r_i = k$ in increasing order of $k \in \{2, 3, \ldots, 2p\}$,  
  and for each fixed $k$ in increasing order of $l_i$.
  For instance, $\mathsf{Pair}(2) = (1,1),(1,2),(2,1),(2,2)$.
  The number of elements in $\mathsf{Pair}(p)$ is $p^2$ and thus $\sum_{i=1}^{p^2}(l_i+r_i)=\Theta(p^3)$ holds.

  Let $\Sigma = \{a_1,\ldots,a_p,b_1,\ldots,b_p,x,y,\#_1,\ldots,\#_{p^2} \}$, where $p \geq 2$.
  Furthermore, for each $i \in [1,p]$, we define the strings $A(i) = a_1\cdots a_i$ and $B(i) = b_i \cdots b_1$.
  Then, we consider the following string $T$.
  \begin{eqnarray*} \label{LZSS_T}
  T & = & b_p \cdots b_1x a_1\cdots a_p\#_1 b_1xa_1\#_2 b_2b_1xa_1 \#_3 \cdots \#_i B(l_{i})xA(r_{i}) \#_{i+1} \cdots \#_{p^2}  \\
    & = & B(p)xA(p) \prod_{i=1}^{p^2} \big( \#_i B(l_{i})xA(r_{i}) \big).  
  \end{eqnarray*}
  Observe that $|T| = \Theta(p^3)$.

  The greedy LZ-End factorization of $T$ is as follows.
  \[
    \lzend (T) = b_p|b_{p-1}|\cdots|b_1|x|a_1|a_2|\cdots|a_p|\prod_{i=1}^{p^2} \big( |\#_i|B(l_{i})xA(r_{i})| \big).
  \]
  Since each character in the interval $[1,2p+1]$ is fresh, they become a phrase of length $1$ each.
  Next, for $i \in [1,p^2]$,
  $\#_i B(l_{i})xA(r_{i})$ is partitioned as $|\#_i | B(l_{i})xA(r_{i})|$.
  This is because $\#_i$ is a fresh character,
  $B(l_{i})xA(r_{i})$ has a previous occurrence in $T[p+1-l_{i}..p+1+r_{i}]$,
  and $T[p+1+r_{i}] = a_{r_i}$ is a single-character phrase,
  and the character $\#_{i+1}$ immediately following $B(l_{i})xA(r_{i})$ is fresh.
  Thus, we have $\LZEnd(T) = 2p^2+2p+1$.

  As for substitutions, consider the string
  \begin{eqnarray*} \label{LZSS_sub_T}
  T' & = & b_p \cdots b_1 \underline{y} a_1\cdots a_p\#_1 b_1xa_1\#_2 b_2b_1xa_1 \#_3 \cdots \#_i B(l_{i})xA(r_{i}) \#_{i+1} \cdots \#_{p^2}  \\
    & = & B(p) \underline{y} A(p) \prod_{i=1}^{p^2} \big( \#_i B(l_{i})xA(r_{i}) \big)  
  \end{eqnarray*}
  that is obtained by substituting the first occurrence of $x = T[p+1]$ with a character $y$ (underlined) which does not occur in $T$.

  Let us analyze the structure of overlapping LZSS of $T'$.
  Since each character in the interval $T'[1..2p+1]$ is fresh, they become a phrase of length $1$ each.
  
  For $i = 1$, $\#_1 a_1 x b_1$ is divided into $|\#_1|a_1|x|b_1|$ because $\#_1$ is a fresh character, 
  neither $a_1 x$ nor $x b_1$ has previous occurrences, and $\#_2$ which immediately follows $b_1$ is a fresh character.

  For each $i \in [2,p^2]$, the substring $\#_i B(l_{i})xA(r_{i})$ is divided into $3$ phrases:
  Since $\#_i$ is a fresh character, it becomes a single-character phrase.
  Hence $B(l_{i})xA(r_{i})$ is divided into $2$ phrases,
  and the second phrase ends immediately before $\#_{i+1}$.
  We have two cases:

  \begin{itemize}
    \item When $r_i \neq 1$, $B(l_i)xA(r_{i-1})$ is the longest prefix of $B(l_i)xA(r_i)$ that has a previous occurrence.
  There are two reasons for this. The first is that the occurrence $B(l_i)xA(r_i) = T[p+1-l_{i}..p+1+r_{i}]$ 
  is lost in $T'$ due to the substitution of $x$ with $y$.
  The second is that there exists $j \in [1, i-1]$ such that $l_j = l_i$ and $r_j = r_i-1$. 
  In particular, if $l_i+r_i = k$, then $l_j + r_j = k-1 < k$. 
  Consequently, by the ordering property of $\mathsf{Pair}(p)$, 
  the pair $(l_j, r_j)$ is guaranteed to appear before $(l_i, r_i)$.
  Consequently, $B(l_{i})xA(r_{i})$ is divided into $|B(l_i)xA(r_{i-1})|a_{r_i}|$.

  \item When $r_i = 1$, then $A(r_i) = A(1) = a_1$.
  The previous occurrence of $B(l_i)xa_1 = T[p+1-l_{i}..p+2]$ 
  is also lost in $T'$ by the substitution.
  Then, $B(l_i)$ is the longest prefix of $B(l_i)xa_1$ that has a previous occurrence.
  $B(l_i)$ occurs in $T[p+1-l_{i},p]$, but $B(l_i)x$ does not.
  Consequently, $B(l_i)xa_1$ is divided into $|B(l_i)|xa_1|$.
  \end{itemize}
  Now it follows that for each $i \in [2,p^2]$, the substring $\#_i B(l_{i})xA(r_{i})$ is divided into $3$ phrases.
  Hence, $\LZsssr(T) = 3p^2+2p+2$ holds.  Recall $|T|=n=\Theta(p^3)$.
  As a result, we have $\LZsssr(T') - \LZEnd(T) \geq 3p^2+2p+2 -(2p^2+2p+1) = p^2 \in \Theta(n^{\frac{2}{3}})$.
  
  For insertions and deletions, by considering $T''$ and $T'''$
  obtained from $T$ 
  by inserting $y$ between $x = T[p+1]$ and $b_1 = T[p+2]$ and deleting $x = T[p+1]$ respectively, 
  we can get similar decompositions in the case of substitutions,
  which lead to $\LZsssr(T'') =3p^2 + O(p)$ and $\LZsssr(T''') = 3p^2 + O(p)$ holds.
  This gives us $\LZsssr(T'') - \LZEnd(T) \in \Omega(n^{\frac{2}{3}})$ and $\LZsssr(T''') - \LZEnd(T) \in \Omega(n^{\frac{2}{3}})$.
  
 Overall, $\AS{\sub}(c,n),\AS{\ins}(c,n),\AS{\del}(c,n) \in \Omega(n^{\frac{2}{3}})$ holds.
\end{proof}

Since $\LZsssr(T) \leq \LZss(T) \leq \LZEndOpt(T) \leq \LZEnd(T)$ holds~\cite{KreftN13,KempaS22},
the worst-case additive sensitivity for $\LZsssr(T)$, $\LZss(T)$, $\LZEndOpt(T)$, and $\LZEnd(T)$ are all $\Omega(n^{\frac{2}{3}})$ by Theorem~\ref{LZ_lb}.

\subsection{Upper bounds}
\subsubsection{Overlapping/Non-overlapping LZSS factorization}

Next, we describe the upper bound on the additive sensitivities of LZ-style factorizations.
Specifically, we consider overlapping LZSS, non-overlapping LZSS, and optimal LZ-End.

\begin{theorem} \label{thm:LZSS_ub}
  $\AS{\edit}(\LZsssr,n) = O(n^{\frac{2}{3}})$ and $\AS{\edit}(\LZss,n) = O(n^{\frac{2}{3}})$ hold 
  for $\edit \in \{\sub, \ins, \del\}$.
\end{theorem}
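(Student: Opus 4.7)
The plan is to construct, from the LZSS factorization $f_1, \ldots, f_z$ of $T$, a valid (not necessarily greedy) LZSS factorization of $T'$ of size $\LZsssr(T) + O(n^{\frac{2}{3}})$. Since the greedy LZSS factorization is optimal among all valid left-to-right LZ-style factorizations, this yields the claimed upper bound on $\LZsssr(T')$, and the same template gives the bound on $\LZss(T')$. I present the substitution case in detail; insertion and deletion are handled by the same strategy with the natural positional shifts.

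Let $f_k$ be the phrase of $T$ containing the edited position $i$. The phrases $f_1, \ldots, f_{k-1}$ and their sources lie entirely in $T[1..i-1]$, so they remain valid phrases of $T'$ verbatim. The phrase $f_k$ is broken up around position $i$ into a constant number of pieces: a prefix copying the corresponding prefix of $f_k$'s source, a single-character phrase at $i$, and a suffix (with one additional sub-split in the overlapping case when $f_k$'s source extends across $i$ into $f_k$ itself, still $O(1)$ phrases). For each $j > k$, I call $f_j$ \emph{broken} if every occurrence of the substring $f_j$ starting before $p_j$ in $T$ contains position $i$. Non-broken phrases are reused verbatim, as they admit a source disjoint from $i$; each broken phrase is split at the in-source position corresponding to $i$, incurring at most a constant number of extra phrases.

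The main obstacle is to bound the number of broken phrases by $O(n^{\frac{2}{3}})$. I fix a threshold $\ell = \Theta(n^{\frac{1}{3}})$ and separate broken phrases by length. Broken phrases of length at least $\ell$ number at most $n/\ell$, since phrase lengths sum to $n$. For broken phrases of length $L < \ell$, the crucial uniqueness observation is: if $k < j < j'$ with $f_j = f_{j'}$ as substrings and both broken, then the position $p_j > i$ is itself an occurrence of $f_{j'}$ starting strictly after $i$ and strictly before $p_{j'}$, hence a valid source for $f_{j'}$ disjoint from position $i$, contradicting the brokenness of $f_{j'}$. Thus each distinct short substring is broken at most once. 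Moreover, a broken phrase of length $L$ must have its leftmost occurrence in $T$ inside the length-$L$ window $[i-L+1, i]$, since otherwise that leftmost occurrence would itself serve as a source disjoint from $i$. Summing over $L < \ell$, the number of distinct broken short substrings is at most $\sum_{L=1}^{\ell-1} L = O(\ell^2)$, so the total broken-phrase count is $O(n/\ell + \ell^2) = O(n^{\frac{2}{3}})$ at $\ell = n^{\frac{1}{3}}$. Multiplying by the $O(1)$ per-phrase cost yields the bound.

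For the non-overlapping variant $\LZss$, sources are strictly to the left of their phrases, which only simplifies the splitting step; the counting argument is unchanged. For insertion and deletion, the edit position is redefined accordingly in $T'$, and the characterization of broken phrases (sources forced into a short window around the edit), together with the uniqueness lemma and the length-window bound, carry over with only a constant-factor adjustment in the per-phrase cost.
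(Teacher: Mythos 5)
Your proposal is correct and follows essentially the same approach as the paper: the same $n^{\frac{1}{3}}$ length threshold, the same per-phrase $O(1)$ splitting cost around the edit position, and the same counting of affected short phrases (each distinct string is affected at most once, and only $O(\ell^2)$ distinct strings of length below $\ell$ can have all relevant occurrences crossing the edit position), giving $O(n/\ell+\ell^2)=O(n^{\frac{2}{3}})$. The only difference is presentational: you construct an explicit valid parse of $T'$ and invoke optimality of the greedy parsing, whereas the paper instead cites Akagi et al.'s bound that each old phrase contains $O(1)$ starting positions of the greedy parse of $T'$ and then applies the same counting.
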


First, we consider the non-overlapping LZSS factorization.
To prove Theorem~\ref{thm:LZSS_ub}, we use the proof of the multiplicative sensitivity of non-overlapping LZSS by Akagi et al~\cite{AKAGI2023104999}.
For the simplicity, we consider the case of substitution (the results for other cases can be shown in a similar way).
Let $T$ be a string and $T'$ the string obtained by editing the $i$-th character of $T$,
$\lzss(T) = f_1 \cdots f_t$, and $\lzss(T') = f'_1 \cdots f'_{t'}$.
We denote the interval of phrase $f_j$ by $[p_j,q_j]$ (i.e., $f_j = T[p_j..q_j]$).
They showed each phrase of $\lzss(T)$ can contain a constant number of beginning positions of phrases of $\lzss(T')$
by considering the three cases.
\begin{proposition}[Theorem~{19} of \cite{AKAGI2023104999}] \label{LZ_ms_ub}
  For $j \in [1,t]$, each interval $[p_j,q_j]$ satisfies as follows.
  \begin{itemize}
    \item[(1)] If $q_j < i$, $[p_j,q_j]$ contains a single beginning position of a phrase in $\lzss(T')$ (i.e., $f'_j = f_j$).
    \item[(2)] If $p_j \leq i \leq q_j$, $[p_j,q_j]$ contains at most constant number of beginning positions of phrases in $\lzss(T')$.
    \item[(3)] If $i < p_j$ and
    \begin{itemize}
      \item[(3-A)] the source of $f_j$ does not contain the position $i$, $[p_j,q_j]$ contains at most one beginning position of phrases in $\lzss(T')$.
      \item[(3-B)] the source of $f_j$ contains the position $i$, $[p_j,q_j]$ contains at most constant number of beginning positions of phrases in $\lzss(T')$.
    \end{itemize}
  \end{itemize} 
\end{proposition}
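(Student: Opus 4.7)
\medskip
\noindent\textbf{Plan.}
I focus on the non-overlapping LZSS case under substitution; the cases of insertion, deletion, and overlapping LZSS follow by analogous arguments, using an overlapping-source analog of Proposition~\ref{LZ_ms_ub}. Write $\lzss(T) = f_1 \cdots f_t$ with $f_j = T[p_j..q_j]$, and let $T'$ be obtained by substituting $T[i]$. The additive increase equals $\sum_{j=1}^{t}(s_j - 1)$, where $s_j$ is the number of phrase-starts of $\lzss(T')$ lying in $[p_j, q_j]$. By Proposition~\ref{LZ_ms_ub}, $s_j = O(1)$ in every case: cases (1) and (3-A) give $s_j \leq 1$ (non-positive excess), and case (2) applies to at most one interval and thus contributes $O(1)$ overall. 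Hence the task reduces to bounding the total positive excess over case (3-B) intervals by $O(n^{\frac{2}{3}})$.

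My plan is a length-threshold argument with $L = \lceil n^{\frac{1}{3}} \rceil$. Case (3-B) phrases with $|f_j| \geq L$ are disjoint substrings of $T$, so there are at most $n / L = O(n^{\frac{2}{3}})$ of them, contributing $O(n^{\frac{2}{3}})$ excess in total. For case (3-B) phrases with $|f_j| < L$, the chosen source $[a_j, b_j]$ satisfies $a_j \leq i \leq b_j$ and $b_j - a_j + 1 < L$; hence it lies within the window $W = T[\max(1, i - L + 1)..\min(n, i + L - 1)]$ and is one of at most $L \cdot L = L^2 = O(n^{\frac{2}{3}})$ possible source intervals.

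The main step, and the key technical hurdle, is to show that for each fixed source interval $[a, b]$, only $O(1)$ of the short (3-B) phrases sharing it can contribute positive excess. If phrases $f_{j_1}, f_{j_2}, \ldots$ (ordered by position) all have source $[a, b]$, they share content $s = T[a..b]$, and since $p_{j_1} > i$, the occurrence of $s$ at position $p_{j_1}$ is preserved in $T'$. This preserved occurrence serves as an earlier reference for the greedy re-parse at each subsequent position $p_{j_l}$, $l \geq 2$, forcing the interval $[p_{j_l}, q_{j_l}]$ to contain at most one phrase-start of $\lzss(T')$ and thus contribute non-positive excess; only $f_{j_1}$ may contribute up to $O(1)$ positive excess. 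Summing over the $O(L^2)$ distinct source intervals in $W$ bounds the short-phrase (3-B) excess by $O(n^{\frac{2}{3}})$.

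Combining all cases yields the $O(n^{\frac{2}{3}})$ additive increase for substitution. Insertion and deletion are handled by shifting indices around position $i$, and the overlapping LZSS bound follows by replacing Proposition~\ref{LZ_ms_ub} with its overlapping-source analog in the same length-split framework. The hardest part is making the reuse argument in the third paragraph rigorous: one must carefully verify that the greedy re-parse, starting from where $\lzss(T')$ first diverges from $\lzss(T)$, does not induce cascading splits within $[p_{j_l}, q_{j_l}]$ for $l \geq 2$ that would inflate the excess beyond the non-positive bound claimed.
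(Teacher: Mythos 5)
Your proposal does not actually prove the quoted statement: its very first step invokes Proposition~\ref{LZ_ms_ub} as a black box and then derives the $O(n^{\frac{2}{3}})$ additive bound from it. The statement to be proved, however, is the per-interval claim itself --- that each old interval $[p_j,q_j]$ contains $O(1)$ beginning positions of phrases of $\lzss(T')$, with the case distinction (1), (2), (3-A), (3-B). Establishing that requires analyzing how the greedy parse of $T'$ diverges from that of $T$: one must show that the re-parse inside the interval containing the edited position $i$ resynchronizes after a constant number of phrases, and that a phrase whose source contains $i$ is cut into only a constant number of new phrases because suitable prefixes/suffixes of it still have occurrences avoiding $i$. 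None of this case analysis appears in your write-up; in the paper the proposition is imported verbatim from Akagi et al.~\cite{AKAGI2023104999} (their Theorem~19) rather than reproved, so citing it back is circular with respect to the stated goal.

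What you do argue is, in substance, the paper's own subsequent derivation of Theorem~\ref{thm:LZSS_ub} from the proposition: the same threshold $n^{\frac{1}{3}}$, the same disjointness count giving $O(n^{\frac{2}{3}})$ long case-(3-B) phrases, and the same reuse idea for short ones --- an identical earlier occurrence at a position to the right of $i$ survives in $T'$ and can serve as a source, so only $O(n^{\frac{2}{3}})$ ``first'' phrases can contribute (you count them by source interval containing $i$, the paper by distinct phrase content; the two counts agree up to constants, since a short source containing $i$ is one of $O(n^{\frac{2}{3}})$ intervals). The caveat you flag yourself is real and is exactly the missing ingredient: the claim that a surviving earlier occurrence forces at most one new phrase start in $[p_{j_l},q_{j_l}]$ is the same greedy-parse synchronization fact that underlies the proposition, so it must be proved, not asserted. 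As it stands, your text is a (mostly faithful) reproduction of the paper's argument \emph{around} the proposition, while the proposition itself is left unproven.
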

In case (1) and (3-A), the number of the beginning positions of phrases in $\lzss(T')$ that are contained in $[p_j,q_j]$ does not increase.
In case (2), the number of beginning positions in the interval increases by a constant number, but such phrase is unique in $\lzss(T)$.
Here, we discuss an upper bound of the total increasing number of beginning positions of phrases in case (3-B) in terms of $n$.
First, we consider phrases in case (3) of length at least $n^{\frac{1}{3}}$.
Then, such phrases can occur at most $O(n^{\frac{2}{3}})$ times in $T$, since $|T|= n$.
Hence the total increasing number of beginning positions of phrases from such phrases is $O(n^{\frac{2}{3}})$, 
since each phrase can contain a constant number of such positions by Proposition~\ref{LZ_ms_ub}.
Next, we consider the number of phrases whose length is smaller than $n^{\frac{1}{3}}$.
If there exists $f_{j'}$ in case (3) such that $j' < j$ and $f_j = f_{j'}$, 
then, $f_{j'} = T'[p_{j'}..q_{j'}]$ can be a source of $T'[p_j..q_j]$.
Thus, in this case, the number of beginning positions of phrases in the interval $[p_{j},q_{j}]$ does not increase.
On the other hand, the maximum number of distinct phrases in case (3) is $O(n^{\frac{2}{3}})$
since the number of substrings of length at most $k$ that contain a position $i$ is $O(k^2)$.
For such $O(n^{\frac{2}{3}})$ phrases, $[p_j,q_j]$ contains at most a constant number of beginning positions of phrases in $\lzss(T')$ by Proposition~\ref{LZ_ms_ub}.
Hence, the total increasing number of beginning positions of phrases from such phrases is also $O(n^{\frac{2}{3}})$.

Overall, the total increasing number of phrases is $O(n^\frac{2}{3})$,
and we obtain $\AS{\edit}(\LZsssr,n) = O(n^{\frac{2}{3}})$.

Akagi et al. proved the multiplicative sensitivity of overlapping LZSS in a similar way~\cite{AKAGI2023104999}.
We can also obtain $\AS{\edit}(\LZss,n) = O(n^{\frac{2}{3}})$ in a similar way.

\subsubsection{Optimal LZ-End factorization.}
As for the sensitivity of optimal LZ-End factorization, we can obtain the following upper bounds.
Together with Theorem~\ref{LZ_lb}, we can also obtain a tight bound $\Theta(n^{\frac{2}{3}})$ for $\AS{\edit}(\LZEndOpt, n)$.

\begin{theorem} \label{LZEnd_ub}
  $\AS{\edit}(\LZEndOpt,n) = O(n^{\frac{2}{3}})$ holds 
  for $\edit \in \{\sub, \ins, \del\}$.
\end{theorem}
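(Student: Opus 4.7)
The plan is to construct, from an optimal LZ-End factorization $F = f_1, \ldots, f_z$ of $T$ (with $z = \LZEndOpt(T)$), a valid LZ-End factorization $F'$ of $T'$ of size at most $z + O(n^{\frac{2}{3}})$, yielding $\LZEndOpt(T') \leq |F'| \leq z + O(n^{\frac{2}{3}})$. I adapt the argument for Theorem~\ref{thm:LZSS_ub} (non-overlapping LZSS) using the case analysis of Proposition~\ref{LZ_ms_ub}, with additional care to respect the LZ-End constraint that every copied phrase's source must end at a phrase boundary of $F'$. A crucial invariant is that my construction only subdivides phrases of $F$ (never merging), so every right endpoint $q_j$ of an $F$-phrase remains a phrase boundary of $F'$. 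I describe the substitution case; insertion and deletion follow by analogous arguments with constant index shifts.

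Each phrase $f_j$ of $F$ is classified into cases (1), (2), (3-A), (3-B) exactly as in Proposition~\ref{LZ_ms_ub}. In $F'$, case-(1) and case-(3-A) phrases are kept as single phrases with their original sources (shifted by the constant edit offset for (3-A) in the insertion or deletion cases). The unique case-(2) phrase $f_c$ is split around position $i$ into a prefix $T'[p_c..i-1]$, a ground phrase $T'[i]$, and a suffix $T'[i+1..q_c]$: the suffix has a valid LZ-End source in the right portion of $f_c$'s original source, ending at the preserved phrase boundary $q_h$, while the prefix is further decomposed along the $F$-phrase boundaries lying inside $f_c$'s original source (which become source endpoints for the resulting pieces). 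This contributes $O(n^{\frac{2}{3}})$ extra phrases by a distinct-substring counting argument parallel to case (3-B) below.

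The principal contribution comes from case (3-B), handled with the same three-piece template. For each case-(3-B) phrase $f_j$ with source containing $i$ at offset $\ell$ inside the source, I split $f_j$ into a prefix, a ground phrase $T'[p_j+\ell]$, and a suffix: the suffix uses as LZ-End source the untouched right portion of $f_j$'s original source and ends at the preserved boundary $q_h$; the prefix uses as LZ-End source the untouched left portion of the original source and ends at position $i-1$, which is a phrase boundary of $F'$ thanks to the case-(2) split. Long case-(3-B) phrases (length $\geq n^{\frac{1}{3}}$) number at most $O(n^{\frac{2}{3}})$ since they are disjoint substrings of $T$, each contributing $O(1)$ extra phrases. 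Short case-(3-B) phrases (length $< n^{\frac{1}{3}}$) have at most $O(n^{\frac{2}{3}})$ distinct contents, because every such content is a substring of $T$ of length less than $n^{\frac{1}{3}}$ containing the fixed position $i$. For each distinct short content $w$, only the first case-(3-B) occurrence $f_{j_0}$ is split into three pieces; every subsequent case-(3-B) occurrence of $w$ is kept as a single phrase in $F'$ whose source ends at the preserved boundary $q_{j_0}$, at which $T'$ has a length-$|w|$ suffix equal to $w$ (since $f_{j_0}$ lies entirely in the post-edit region). Thus case (3-B) contributes $O(n^{\frac{2}{3}})$ extra phrases in total.

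The main obstacle is maintaining LZ-End source validity for every newly introduced phrase, since LZ-End is strictly more restrictive than LZSS in this respect. The construction resolves this by leveraging three kinds of phrase boundaries in $F'$: the preserved $F$-phrase boundaries (from the subdivide-only invariant); the new boundary at $i-1$ created by the case-(2) split, which serves as the source endpoint for the prefix pieces of case-(3-B) splits; and the first-representative endpoints $q_{j_0}$ for distinct short case-(3-B) contents, reused as source endpoints for all later occurrences. Summing the contributions -- $O(n^{\frac{2}{3}})$ from case (2), $O(n^{\frac{2}{3}})$ from long case (3-B), and $O(n^{\frac{2}{3}})$ from short case (3-B) -- gives $|F'| \leq z + O(n^{\frac{2}{3}})$, proving the theorem.
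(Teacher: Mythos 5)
Your overall strategy is the paper's: subdivide an optimal LZ-End factorization of $T$ using the LZSS-style case analysis, keep the subdivide-only invariant so that original phrase endpoints survive as boundaries, and count case-(3-B) phrases by splitting into long ones (disjointness gives $O(n^{\frac{2}{3}})$) and short ones (deduplicated by content, with the first occurrence's preserved endpoint reused as a source end). That part, including the use of the new boundary at $i-1$ for the prefix pieces of (3-B) splits, is correct and matches the paper.

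The genuine gap is in your case (2), which is exactly where LZ-End differs from LZSS and where the paper does its main new work. Decomposing the prefix $T'[p_c..i-1]$ ``along the $F$-phrase boundaries lying inside $f_c$'s original source'' does not yield a valid LZ-End parse: the chunk of the prefix whose source copy lies to the right of the last boundary inside the source does not end at any phrase endpoint, and if no boundary falls inside the source copy of the prefix at all, the entire prefix is left with an invalid source. The paper resolves this by a recursive relocation: the leftover piece is a proper prefix of some earlier phrase, so it reoccurs as a prefix of that phrase's source, and one repeats, charging each produced piece to a strictly smaller phrase index (Proposition~\ref{LZEnd_ms_ub}, Fig.~\ref{Fig_LZ-End_case2}); this is the missing construction. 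Your counting for case (2) is also unsupported: a ``distinct-substring counting argument parallel to case (3-B)'' does not apply, since here a single phrase is being subdivided (deduplication by content saves nothing) and its pieces need not be short nor contain position $i$. One must bound the number of pieces separately, e.g.\ via the paper's $I+1$ bound together with its dichotomy on $I$, or by observing that in the recursion the unparsed remainder shrinks by at least one character per step while each step is charged to a distinct earlier phrase longer than the current remainder, so the pieces number $O(\sqrt{n}) \subseteq O(n^{\frac{2}{3}})$. Without a valid construction and a count for case (2), the claimed $O(n^{\frac{2}{3}})$ bound is not established.
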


Similarly to the case of LZSS (Theorem~\ref{LZEnd_ub}) in the previous section,  
we also use an idea for the multiplicative sensitivity of optimal LZ-End factorization.
However, there is no result for it.
Here, we first show an upper bound of the multiplicative sensitivity of the optimal LZ-End factorization by a similar idea.
For ease of discussion, we consider the case of substitution.
Let $T$ be a string and $T'$ the string obtained by substituting the $i$-th character $T[i] = a$ with $b$, and
$f_1, \ldots, f_t$ an LZ-End factorization of $T$.
We denote the interval of phrase $f_j$ by $[p_j,q_j]$ (i.e., $f_j = T[p_j..q_j]$).
Moreover, let $I$ be an integer that satisfies $p_I \leq i \leq q_I$ (i.e., the index of a phrase that contains the edited position $i$).
We consider the following operations to $f_1, \ldots, f_t$.
\begin{proposition} \label{LZEnd_ms_ub}
  An LZ-End factorization of $T'$ can be obtained
  by applying the following operations to $f_1, \ldots, f_t$.
  \begin{itemize}
    \item[(1)] If $q_j < i$, we don't need any changes.
    \item[(2)] If $p_j \leq i \leq q_j$, we divide the phrase $f_j$ into at most $I+1$ phrases. 
    \item[(3)] If $i < p_j$ and 
    \begin{itemize}
      \item[(3-A)] the source of $f_j$ does not contain the position $i$, we don't need any changes.
      \item[(3-B)] the source of $f_j$ contains the position $i$, we divide the phrases $f_j$ into at most three phrases. 
    \end{itemize}
  \end{itemize} 
\end{proposition}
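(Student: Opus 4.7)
The plan is to construct a valid LZ-End factorization of $T'$ by applying the four stated types of operations to the phrases of $f_1, \ldots, f_t$ in left-to-right order, verifying in each case that the resulting phrases are valid LZ-End phrases of $T'$ whose sources end at phrase boundaries preserved in the new factorization. Cases (1) and (3-A) are immediate: in case (1), both $f_j$ and its source lie in the unchanged prefix $T[1..i-1] = T'[1..i-1]$, so $f_j$ is literally preserved in $T'$; in case (3-A), the content of $f_j$ and its source both avoid position $i$, so the phrase is identical in $T$ and $T'$ and remains valid.

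For case (3-B), let the source of $f_j$ be $T[u..v]$ with $v = q_{h_j}$ and $u \leq i \leq v$, and let $f_{h^{**}}$ denote the phrase of $T$ containing position $i$, so $h^{**} \leq h_j$. I would split $f_j$ into up to three phrases whose corresponding source intervals are $[u, q_{h^{**}-1}]$, $[p_{h^{**}}, q_{h^{**}}]$, and $[q_{h^{**}}+1, v]$. The first and third copy from the intact $T'$ intervals ending at the preserved phrase boundaries $q_{h^{**}-1}$ and $v = q_{h_j}$. The middle phrase has content equal to the unchanged $f_{h^{**}}$ and can reuse $f_{h^{**}}$'s own source, whose endpoint $q_{h''} < p_{h^{**}} \leq i$ is intact in $T'$; the degenerate subcase where $f_{h^{**}}$ is a fresh single character is handled by observing that the middle phrase is then also a length-$1$ phrase of the same character.

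For case (2), the edit lies inside $f_I$, so $T'[p_I..q_I]$ differs from the copy of its source $T[u..v]$ at exactly one position (the image of $i$). The source itself, ending at $v = q_{h_I} < p_I$, is intact in $T'$. The plan is a recursive decomposition: split $f_I$ at the positions corresponding to the phrase boundaries of $T$ falling inside $[u, v]$. Each piece whose source interval avoids the image of $i$ is a valid LZ-End phrase, copying from an intact interval ending at a preserved phrase boundary. The unique piece whose source interval contains the image of $i$ has content equal to some $f_{h_1}$ (with $h_1 \leq h_I < I$) with only the edit character flipped; apply the same recursive decomposition to this piece using $f_{h_1}$'s source. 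Since the indices $h_1 > h_2 > \cdots$ strictly decrease along the recursion, it terminates within at most $I - 1$ levels, ultimately isolating a single character at position $i$. A telescoping count of the pieces produced along the recursion chain yields at most $I + 1$ phrases for $f_I$.

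The main obstacle is case (2): one must carefully verify the telescoping bound on the total number of pieces, handle degenerate cases where some $f_{h_\ell}$ along the recursion chain is itself a fresh single character or a single-character copy, and confirm inductively that every source endpoint used at each recursion level remains a phrase boundary in the new factorization of $T'$. The last point follows because earlier phrases are processed first, and the right-ending pieces of the decompositions in cases (2) and (3-B) are chosen so that the original $q_j$ boundaries survive in $T'$.
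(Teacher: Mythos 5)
Your cases (1) and (3-A) match the paper, and your case (3-B) is a sound variant of the paper's (splitting at the boundaries of the phrase containing position $i$ and re-sourcing the middle piece from that phrase's own source, rather than splitting at the edited source position as the paper does); both give at most three pieces. The genuine gap is in case (2), which is exactly where the bound $I+1$ has to be earned. Your scheme splits the content of $f_I$ at \emph{all} original phrase boundaries falling inside its source and recurses only on the flip-containing piece, claiming a telescoping count. That count is false: the recursion can re-emit the same early phrases at many levels, because the left-hand finalized pieces at level $\ell$ end at boundaries with indices below $m_\ell$, and the level-$(\ell+1)$ interval lies below $m_\ell$ as well, so the index ranges overlap. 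Concretely, let $X = x_1\cdots x_k$ with distinct fresh characters and $T = XXXX$, with the valid (indeed greedy) LZ-End factorization $x_1|\cdots|x_k|X|XX$, so $t = I = k+2$; edit the last character of $T$. Your level-1 split of $f_I = XX$ (source $[1,2k]$) produces the $k$ single-character pieces $x_1,\ldots,x_k$ plus a problematic piece equal to $X$ with its last character flipped; recursing via the source $[1,k]$ of the phrase $X$ produces $k-1$ further single-character pieces plus the edited character. That is $2k$ phrases from $f_I$, exceeding the claimed $I+1 = k+3$ for every $k \geq 4$; iterating the construction (phrases $X, X^2, X^4, \ldots$, each sourced from the entire prefix) makes your count grow like $I^2$ rather than $I+1$. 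So your construction yields a valid LZ-End factorization of $T'$, but not the quantitative bound stated in item (2), which is what the $O(n^{2/3})$ additive-sensitivity argument actually uses.

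The paper's proof avoids this by never splitting the part of $f_I$ to the right of the edit: writing $f_I = w_1 a w_2$, it keeps $w_2$ as a single phrase sourced from the suffix of $f_I$'s original source, isolates the edited character, and decomposes only the left part $w_1$, peeling off at each step the \emph{maximal} prefix whose occurrence contains an LZ-End boundary $q_{k'}$ and recursing on the remainder, so that each step contributes exactly one phrase and the boundary indices do not increase; this is what yields at most $I-1$ pieces for $w_1$ and $I+1$ in total. To repair your argument you would need a similarly parsimonious rule (one piece per recursion step, with a monotone index or length argument), not a split at every interior boundary.
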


We show how to construct an LZ-End factorization of $T'$ from $f_1, \ldots, f_t$.
In case (1), for the first $I-1$ phrases, 
$T'$ has the same LZ-End factorization since the prefix is unchanged by the substitution.
In case (2), let $T[p_I..q_I] = f_I = w_1aw_2$ and $T'[p_I..q_I] = w_1bw_2$. 
First, the substring $b$ can become a phrase of LZ-End of $T'$ since $b$ is a single character.
Next, since $w_1aw_2$ is a phrase of an LZ-End of $T$, there exists an integer $k < I$ such that $w_1aw_2$ is a suffix of $f_1 \cdots f_k$.
Then, $w_2$ in the source of $f_I$ can still be a source of a new phrase $w_2$ in $T'$.
In the rest of this part, we show that $w_1 = T[p_I..p_I+|w_1|-1]$ can be divided into at most $I$ LZ-End phrases of $T'$.
Since $w_1$ is a prefix of $f_I$, $w_1$ has a previous occurrence in its source.
By traversing phrases to the left, we can find a phrase $f_{k'}$ 
such that $k' < I$ and the LZ-End boundary $q_{k'}$ contained in an occurrence of $w_1$.
Assume that $k'$ is the largest integer in such indices.
Then $w_1$ at the position can be written as $w_1 = uv$,
where $u$ is a suffix of $f_1 \cdots f_{k'}$ and $v$ is a prefix of $f_{k'+1} \cdots f_{I-1}$.
From this structure, prefix $u$ of $T'[p_I..q_I]$ can be an LZ-End phrase such that the source ends at $q_{k'}$.
For the remaining suffix $v$ of $w_1$, in a similar argument, 
we can find a phrase $f_{k''}$ 
such that $k'' < k'+1~(< I)$ and the LZ-End boundary $q_{k''}$ contained in an occurrence of $v$.
Then $v$ at the position can be written as $v = u'v'$, and
the substring $u'$ of $T'[p_I..q_I]$ can be an LZ-End phrase such that the source ends at $q_{k''}$.
By traversing phrases and dividing $w_1$ recursively, 
we can finally obtained at most $I-1$ LZ-End phrases of $T'$ 
such that the concatenation of these phrases is $T'[p_I..p_I+|w_1|-1] = w_1$
(see Fig.~\ref{Fig_LZ-End_case2} for an illustration). 
Thus we can obtain at most $I+1$ LZ-End phrases of $T'$ from $f_I$.
We consider phrases in case (3-A).
$f_j$ has a source $T[p_j'..q_j']$.
Since $i \notin [p_j',q_j']$, this string also appear at the same position in $T'$.
Our construction of an LZ-End factorization of $T'$ only use divisions of phrases of the original factorization of $T$.
This implies that there exists a phrase of an LZ-End factorization of $T'$ that ends at $q_j'$.
Hence, $T'[p_j..q_j]$ can be a phrase whose source is $T'[p_j'..q_j']$.
In case (3-B), let $f_{j} = w_3aw_4$ such that $a$ corresponds to the edited position in its source.
We can see that $w_3$, $a$, $w_4$ can be LZ-End phrases in $T'$,
because $w_3$ occurs at $[p_j',i-1]$ and there exists an LZ-End phrase in $T'$ that ends at $i-1$ by case (2),
$a$ is a single character, and
$w_4$ still occurs at $[i+1,q_j']$ in $T'$ (as a suffix of the original phrase in $T$).

Overall, we can obtain an LZ-End factorization of $T'$ from a given LZ-End factorization of $T$ 
by dividing each phrase in $T$ into at most three parts.

\begin{figure}[t]
  \centering
  \includegraphics[keepaspectratio, width=12cm]{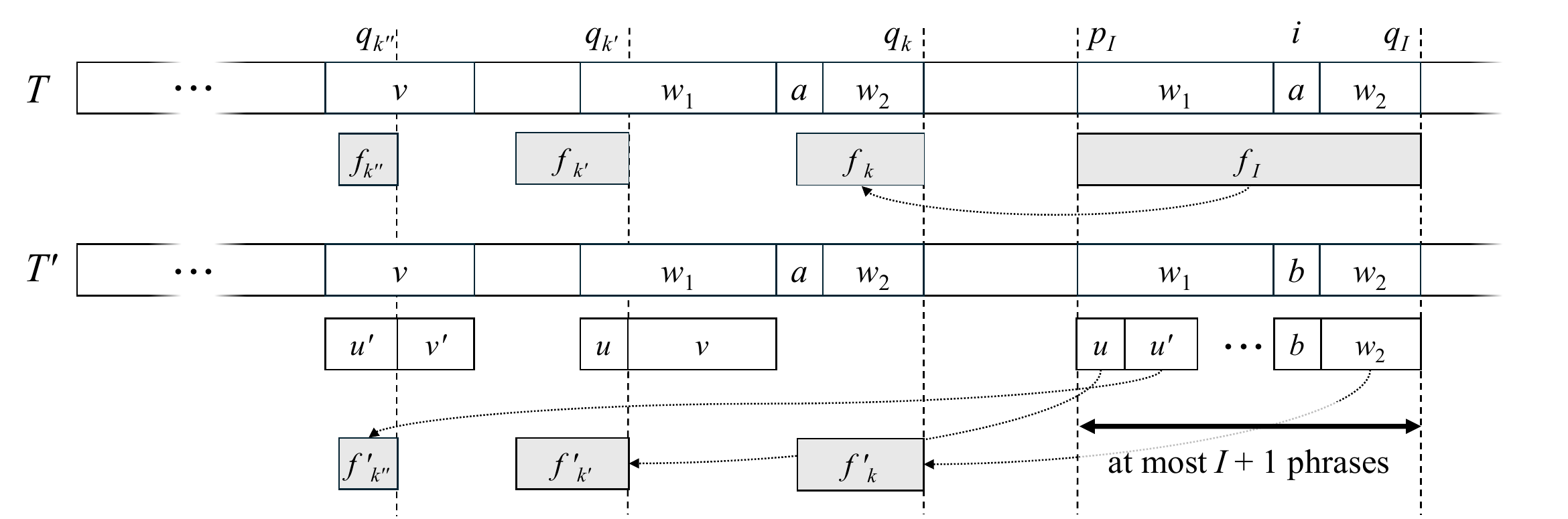}
  \caption{Illustration of the division operation for Case~(2).}
  \label{Fig_LZ-End_case2}
\end{figure} 

Moreover, respect to deletions, a proposition that is completely the same Proposition~\ref{LZEnd_ms_ub} holds.
As for insertions, although it is almost the same as Proposition~\ref{LZEnd_ms_ub},
in case (3-B), $f_j$ can be divided into at most two phrases not three.
This can be proved the similar way of Akagi et al. [Theorem~{18} of \cite{AKAGI2023104999}].
Then, we get the following upper bound for the multiplicative sensitivity of $\LZEndOpt$.

\begin{corollary}
  $\MS{\sub}(\LZEndOpt,n) = 3$ ,$\MS{\del}(\LZEndOpt,n) = 3 , \MS{\ins}(\LZEndOpt,n) = 2$.
\end{corollary}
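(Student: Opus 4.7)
The plan is to derive the three multiplicative sensitivity upper bounds by aggregating, phrase by phrase, the bounds of Proposition~\ref{LZEnd_ms_ub}. Fix an optimal LZ-End factorization $f_1, \ldots, f_t$ of $T$, so $t = \LZEndOpt(T)$, and let $I$ be the index of the phrase that contains the edited position $i$ (for insertion, the phrase into which the new character is inserted).

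The central calculation is a simple sum. The first $I-1$ phrases fall into case (1) of Proposition~\ref{LZEnd_ms_ub} and are kept verbatim, contributing $I-1$ phrases. The phrase $f_I$ falls into case (2) and contributes at most $I+1$ phrases. Each of the remaining phrases $f_{I+1}, \ldots, f_t$ falls into case (3): it contributes $1$ phrase in case (3-A), and at most $3$ phrases (for substitution and deletion) or at most $2$ phrases (for insertion) in case (3-B). Taking the worst case uniformly for the suffix phrases yields the totals
\[
  (I-1) + (I+1) + 3(t-I) \;=\; 3t - I \;\leq\; 3t
\]
for substitution and deletion, and
\[
  (I-1) + (I+1) + 2(t-I) \;=\; 2t
\]
for insertion. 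Since the construction underlying Proposition~\ref{LZEnd_ms_ub} produces a valid LZ-End factorization of $T'$, we conclude $\LZEndOpt(T') \leq 3\,\LZEndOpt(T)$ for substitution and deletion, and $\LZEndOpt(T') \leq 2\,\LZEndOpt(T)$ for insertion. Dividing through gives the three stated constants.

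The main obstacle is the apparent blowup in case (2): the single phrase $f_I$ can be shattered into as many as $I+1$ pieces, and $I$ itself can be of order $t$, so the naive worst case would already exceed $t$ just from $f_I$. The amortization that saves us is that the $I-1$ phrases preceding $f_I$ are kept verbatim under case (1); hence the first $I$ phrases of $T$ jointly contribute at most $2I$ new phrases rather than $3I$, and this saving exactly compensates for the oversplit of $f_I$. Everything else is bookkeeping: Proposition~\ref{LZEnd_ms_ub} has already certified that the pieces produced in each case really do form a valid LZ-End factorization of $T'$, so no further structural argument is required beyond the above count.
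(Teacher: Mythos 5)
Your proposal is correct and takes essentially the same route as the paper: both derive the bounds by summing the per-phrase counts of Proposition~\ref{LZEnd_ms_ub}, and your explicit amortization of the at most $I+1$ pieces of $f_I$ against the $I-1$ unchanged case-(1) phrases is precisely the accounting that yields $3t-I\leq 3t$ (resp.\ $2t$), which the paper's closing remark that ``each phrase is divided into at most three parts'' glosses over. As in the paper, only the upper bounds are actually established; the equalities in the statement would additionally need matching lower-bound examples, which neither your argument nor the paper's supplies at this point.
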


Then, we describe the upper bound of the additive sensitivity of the optimal LZ-End factorization in a similar way of Theorem~\ref{thm:LZSS_ub}.
From Proposition~\ref{LZEnd_ms_ub}, 
in case (1) or (3-A), the number of phrases remains unchanged.
Here, we give an upper bound of increasing number of phrases in case (2) and (3-B) in terms of $n$. 
If $I > n^{2/3}$, then the number of phrases increases by at most $O(n^{\frac{2}{3}})$, since $|T|= n$.
If $I < n^{2/3}$, then the number of phrases in the interval $[p_I,q_I]$ (case~(2)) 
increases at most $I+1 \in O(n^{\frac{2}{3}})$. Next, by the same way with non-overlapping LZSS, 
we can prove that the upper bound of increasing number of phrases is $O(n^{\frac{2}{3}})$.
In any case, the total increasing number of phrases is $O(n^{\frac{2}{3}})$.
and we obtain $\AS{\edit}(\LZEndOpt,n) = O(n^{\frac{2}{3}})$ for $\edit \in \{\sub, \ins, \del\}$.

\section{Additive Sensitivity of LZ78}

In this section, 
we present the tight bound
$\AS{\edit}(\LZSevenEight,n) = \Theta(n)$
for the additive sensitivity of the Lempel-Ziv 78 factorization size $\LZSevenEight$ for all types of edit operations $\edit \in \{\sub, \ins, \del\}$.
Note that the upper bound $\AS{\edit}(\LZSevenEight,n) = O(n)$ trivially holds,
since we have $1 \leq \LZSevenEight(T) \leq n$ for any string $T$ of length $n$.

We show the following lower bound:
\begin{theorem} \label{LZ78_main_result}
$\AS{\edit}(\LZSevenEight,n) = \Omega(n)$ holds for $\edit \in \{\sub, \ins, \del\}$.
\end{theorem}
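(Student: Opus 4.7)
The plan is to establish the $\Omega(n)$ lower bound by an explicit construction, in the same spirit as the construction used in Theorem~\ref{LZ_lb} for the other LZ-style measures, but tailored to the LZ78 parsing rule. The goal is to exhibit, for each edit type $\edit \in \{\sub,\ins,\del\}$, a family of pairs $(T, T')$ with $T \in \Sigma^n$ and $T'$ at edit distance $1$ from $T$ such that $\LZSevenEight(T') - \LZSevenEight(T) = \Omega(n)$. Because the universal lower bound $\LZSevenEight(T) = \Omega(\sqrt{n})$ allows at most $\Theta(n)$ room between any two values, the construction must simultaneously achieve $\LZSevenEight(T) = o(n)$ and $\LZSevenEight(T') \geq c n$ for a constant $c>0$; this forces a large alphabet (size $\Theta(n)$) and a structure that is \emph{highly compressible} by LZ78 yet \emph{catastrophically damaged} by a single local edit.

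The key tool is the cascading nature of LZ78 parsing. Since each new phrase extends an existing trie phrase by exactly one character, a carefully designed prefix $P$ can ``plant'' a critical long phrase $w$ (of length $\Theta(\sqrt{n})$) in the LZ78 trie; a long suffix $R$ of length $\Theta(n)$, consisting of many occurrences of $w$ or of its extensions, then gets absorbed into $O(1)$ phrases per occurrence, making $\LZSevenEight(T) = O(\sqrt{n})$. Concretely, I would take $T = P \cdot R$ with $P = c_1 \cdot c_1 c_2 \cdot c_1 c_2 c_3 \cdots c_1 c_2 \cdots c_L$, which guarantees $w = c_1 c_2 \cdots c_L$ is a phrase in the LZ78 trie after $P$, and $R = w^m$ with $Lm = \Theta(n)$. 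I would then locate a single-character edit inside $P$ (for substitution, changing the first symbol; for insertion/deletion, inserting or deleting one symbol inside $P$) that prevents $w$ from ever becoming a phrase. Without $w$ in the trie, each occurrence of $w$ inside $R$ must be parsed into short fragments, and a careful analysis would give $\Omega(n)$ phrases in the resulting parse of $T'$.

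Once the substitution case is established, the insertion and deletion cases would follow by similar modifications to $P$, because in each case the edit shifts the positions at which LZ78 closes phrases inside $P$ and destroys the same critical trie node. In every case, verification amounts to tracing the LZ78 algorithm step by step on both $T$ and $T'$ and counting phrases block by block.

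The main obstacle is designing $P$ and $R$ so that the edit genuinely yields a \emph{linear} increase in the number of phrases rather than a $\Theta(\sqrt{n})$ increase, which is what naive self-referential constructions (e.g., $P$ alone, without the repetition $w^m$) deliver. Since the LZ78 trie grows monotonically during parsing, even after an edit destroys one long phrase, nearby phrases of comparable length can be rebuilt by the parser; to prevent this rebuilding from happening inside $R$, the blocks in $R$ must be structured so that the missing anchor $w$ is used repeatedly but cannot be recreated from the shifted parse. Verifying this requires a detailed combinatorial analysis of how the LZ78 trie evolves inside $R$ after the critical phrase is removed, and it is the step where most of the technical work would lie.
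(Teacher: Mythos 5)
There is a genuine gap, and it sits exactly where you flag your ``main obstacle'': the concrete construction you propose cannot work. With $R=w^m$, $w=c_1\cdots c_L$, every phrase that LZ78 produces inside $R$ (in $T'$ as well as in $T$) is a substring of the periodic string $w^m$, and since $w$ has $L$ distinct characters, $w^m$ has at most $L$ distinct substrings of each length. Because LZ78 phrases are pairwise distinct, any parse of $R$ with $z$ phrases must use lengths up to roughly $z/L$, so $|R|=Lm \gtrsim z^2/(2L)$, i.e.\ $z=O(L\sqrt{m})$. With $L=\Theta(\sqrt n)$, $m=\Theta(\sqrt n)$ this is $O(n^{3/4})=o(n)$, and no choice of $L,m$ with $m\rightarrow\infty$ makes $L\sqrt m=\Omega(Lm)$. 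Concretely, even if the edit in $P$ prevents $w$ from being planted, the trie rebuilds long substrings of $w$ within the first few periods of $R$ (this is the ``rebuilding'' you mention), so the damage is sublinear no matter how you tune the edit. A second, smaller issue is your premise that the construction \emph{must} have $\LZSevenEight(T)=o(n)$ and $\LZSevenEight(T')\geq cn$: only the \emph{difference} needs to be linear, and in fact any attempt to push $\LZSevenEight(T')$ to $\Theta(n)$ on a string with small attractor-like structure runs into exactly the distinct-phrase counting bound above.

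The paper's construction avoids this by making both parses linear in size and engineering a persistent one-phrase-per-block loss, rather than destroying one long anchor. It takes $T=\prod_{i=1}^{p}c_i\,\prod_{i=1}^{p}(a_ia_ib_i)\,\prod_{i=1}^{p}(a_ib_ic_i)$ over an alphabet of $3p+1$ symbols, so each middle block $a_ia_ib_i$ plants $a_ib_i$ as a phrase and each final block $a_ib_ic_i$ is then a single phrase, giving $\LZSevenEight(T)=4p$ with $n=7p$. Substituting the single character $T[4p+1]=a_1$ by $\#$ shifts the phrase boundaries in the last third by one position, and because every block uses fresh symbols $a_i,b_i,c_i$, the parse can never resynchronize: the suffix becomes $|\#|b_1|c_1a_2|b_2|c_2a_3|\cdots|b_p|c_p|$, costing two phrases per block, so $\LZSevenEight(T')=5p+1$ and the gap is $p+1=\Omega(n)$. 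If you want to salvage your approach, you need this kind of desynchronization argument with per-block fresh characters (which also supplies the polynomially large alphabet that, as you correctly note, is necessary); a single destroyed trie node over a periodic tail cannot give a linear gap.
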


Due to the space constraint, we just present a family of strings which we will use in our proof.
A full proof can be found in Appendix.
Let $p \in \mathbb{N}$ and $\Sigma = \{ a_1,..,a_p,b_1,..,b_p,c_1,..,c_p, \# \}$.
Consider the string $T$ of length $7p$ and $T'$ (for substitutions) 
that can be obtained by substituting $T[4m+1] = a_1 $ with a character $\#$ which does not occur in $T$.
\[
  T = \prod_{i=1}^{p} \big(c_i\big) \prod_{i=1}^{p} \big(a_i a_i b_i\big) \prod_{i=1}^{p}  \big(a_i b_i c_i\big),~
  T' = \prod_{i=1}^{p} \big(c_i\big) \prod_{i=1}^{p} \big(a_i a_i b_i\big)   \# b_1 c_1 \prod_{i=2}^{p}  \big(a_i b_i c_i\big).
\]
For these strings, we can show $\LZSevenEight(T) = 4p$ and $\LZSevenEight(T') = 5p+1$.
Hence, we get $\AS{\sub}(\LZSevenEight,n) \geq p+1 \in \Omega(n)$ because $n = 7p$.
In the case of insertions and deletions, with the strings
obtained from $T$ 
by inserting $\#$ between $a_1 = T[4p+1]$ and $b_1 = T[4p+2]$ or
by deleting $a_1 = T[4p+1]$ respectively, 
we get similar decompositions as in the case of substitutions,
which give us $\AS{\ins}(\LZSevenEight,n), \AS{\del}(\LZSevenEight,n) \in \Omega(n)$.

\clearpage

\bibliographystyle{splncs04}
\bibliography{ref}

\begin{thebibliography}{10}
\providecommand{\url}[1]{\texttt{#1}}
\providecommand{\urlprefix}{URL }
\providecommand{\doi}[1]{https://doi.org/#1}

\bibitem{AKAGI2023104999}
Akagi, T., Funakoshi, M., Inenaga, S.: Sensitivity of string compressors and
  repetitiveness measures. Information and Computation  \textbf{291},  104999
  (2023). \doi{10.1016/j.ic.2022.104999},
  \url{https://www.sciencedirect.com/science/article/pii/S0890540122001547}

\bibitem{BlockiLSY25}
Blocki, J., Lee, S., Garcia, B.S.Y.: Differentially private compression and the
  sensitivity of {LZ77}. CoRR  \textbf{abs/2502.09584} (2025).
  \doi{10.48550/ARXIV.2502.09584},
  \url{https://doi.org/10.48550/arXiv.2502.09584}

\bibitem{Burrows94ablock-sorting}
Burrows, M., Wheeler, D.: A block-sorting lossless data compression algorithm.
  Tech. rep., DIGITAL SRC RESEARCH REPORT (1994)

\bibitem{CharikarLLPPSS05}
Charikar, M., Lehman, E., Liu, D., Panigrahy, R., Prabhakaran, M., Sahai, A.,
  Shelat, A.: The smallest grammar problem. {IEEE} Trans. Inf. Theory
  \textbf{51}(7),  2554--2576 (2005)

\bibitem{GiulianiILRSU25}
Giuliani, S., Inenaga, S., Lipt{\'{a}}k, Z., Romana, G., Sciortino, M., Urbina,
  C.: Bit catastrophes for the burrows-wheeler transform. Theory Comput. Syst.
  \textbf{69}(2), ~19 (2025). \doi{10.1007/S00224-024-10212-9},
  \url{https://doi.org/10.1007/s00224-024-10212-9}

\bibitem{KempaP18}
Kempa, D., Prezza, N.: At the roots of dictionary compression: string
  attractors. In: {STOC} 2018. pp. 827--840. {ACM} (2018)

\bibitem{KempaS22}
Kempa, D., Saha, B.: An upper bound and linear-space queries on the lz-end
  parsing. In: Proceedings of the 2022 {ACM-SIAM} Symposium on Discrete
  Algorithms, {SODA} 2022. pp. 2847--2866. {SIAM} (2022)

\bibitem{KociumakaNP20}
Kociumaka, T., Navarro, G., Prezza, N.: Towards a definitive measure of
  repetitiveness. In: {LATIN}. pp. 207--219 (2020)

\bibitem{KreftN13}
Kreft, S., Navarro, G.: On compressing and indexing repetitive sequences.
  Theor. Comput. Sci.  \textbf{483},  115--133 (2013)

\bibitem{repetitiveness_Navarro21a}
Navarro, G.: Indexing highly repetitive string collections, part {I:}
  repetitiveness measures. {ACM} Comput. Surv.  \textbf{54}(2),  29:1--29:31
  (2021). \doi{10.1145/3434399}, \url{https://doi.org/10.1145/3434399}

\bibitem{Rytter03}
Rytter, W.: Application of {Lempel-Ziv} factorization to the approximation of
  grammar-based compression. Theor. Comput. Sci.  \textbf{302}(1-3),  211--222
  (2003)

\bibitem{StorerS82}
Storer, J.A., Szymanski, T.G.: Data compression via textual substitution. J.
  {ACM}  \textbf{29}(4),  928--951 (1982)

\bibitem{LZ77}
Ziv, J., Lempel, A.: A universal algorithm for sequential data compression.
  IEEE Transactions on Information Theory  \textbf{23}(3),  337--343 (1977)

\bibitem{LZ78}
Ziv, J., Lempel, A.: Compression of individual sequences via variable-rate
  coding. {IEEE} Trans. Inf. Theory  \textbf{24}(5),  530--536 (1978)

\end{thebibliography}

\clearpage

\section*{Appendix}

\subsection*{Proof of Theorem~\ref{LZ78_main_result}}

  Let $p \in \mathbb{N}$ and $ \Sigma = \{ a_1,..,a_p,b_1,..,b_p,c_1,..,c_p, \# \}. $
  Consider the following string
  \begin{align*}
    T & = c_1c_2 \cdots c_p a_1 a_1 b_1 a_2 a_2 b_2 \cdots a_p a_p b_p a_1 b_1 c_1 a_2 b_2 c_2 \cdots a_p b_p c_p \\
    & = \prod_{i=1}^{p} \big(c_i\big) \prod_{i=1}^{p} \big(a_i a_i b_i\big) \prod_{i=1}^{p}  \big(a_i b_i c_i\big).
  \end{align*}
  Clearly, $|T| = n = 7p$ holds. 
  The LZ78 factorization of $T$ is as shown below:
  \begin{align*}
    \lzseveneight(T) & = |c_1|c_2| \cdots |c_p| a_1 |a_1 b_1| a_2| a_2 b_2| \cdots |a_p| a_p b_p | a_1  b_1 c_1 | a_2  b_2 c_2 | \cdots | a_p b_p c_p |,
  \end{align*}
  where each $|$ denotes a phrase boundary.
  To see why,
  observe that each $c_i$ is a fresh character
  in the prefix interval $[1,p]$ in $T$,
  and thus $c_i$~($1 \leq i \leq p$) is a phrase.
  Next, in the interval $[p+1,4p]$,
  $a_ia_ib_i$ for each $1 \leq i \leq p$ is divided into $|a_i|a_ib_i|$.
  This is because the left $a_i$ is a fresh character and it becomes the phrase of length $1$.
  Also, the substring $a_ib_i$ has no previous occurrence as a phrase. 
  Lastly, in the interval $[4p+1 ,7p]$,
  the substring $a_ib_ic_i$ for each $1 \leq i \leq p$ becomes a phrase.
  This is because $a_ib_i$ has a previous occurrence as a phrase, but $a_ib_ic_i$ does not.
  Hence, we have $\LZSevenEight(T) = 4p$.

  As for substitutions, consider the following string $T'$
  \begin{align*}
    T' & = c_1c_2 \cdots c_p a_1 a_1 b_1 a_2 a_2 b_2 \cdots a_p a_p b_p \# b_1 c_1 a_2 b_2 c_2 \cdots a_p b_p c_p \\
    & = \prod_{i=1}^{p} \big(c_i\big) \prod_{i=1}^{p} \big(a_i a_i b_i\big)   \# b_1 c_1 \prod_{i=2}^{p}  \big(a_i b_i c_i\big)
  \end{align*} 
  which can obtained by substituting $T[4m+1] = a_1 $ with a character $\#$ which does not occur in $T$.
  Let us analyze the structure of LZ78 factorization of $T'$. First, the factorization of $T'[1..4p]$ is the same as $T[1..4p]$ before the edit.
  Then, since $\# = T'[4p+1]$ is a new character and $b_1 = T'[4p+2]$ has no previous occurrence as a phrase,
  $\#b_1$ is factorized into $|\#|b_i|$. Next, the substring $c_1a_2$ has no previous occurrence (as a phrase), but $c_1$ has.
  Therefore, $c_1a_2$ becomes a phrase.
  Then, $c_{i-1}a_ib_i$ for each $2 \leq i \leq p$ 
  is divided into $|c_{i-1}a_i|b_i|$, since $c_{i-1}$ has a previous occurrence as a phrase, but $c_{i-1}a_i$ does not.
  Thus, the LZ78 factorization of $T'$ is 
  \begin{align*}
    \lzseveneight(T') & = |c_1| \cdots |c_p| a_1 |a_1 b_1| \cdots |a_p| a_p b_p |
     \#| b_1 | c_1 a_2 | b_2 | c_2 a_3 |\cdots |c_{p-1}a_p |b_p | c_p |
  \end{align*}
  which contains $\LZSevenEight(T') = 5p+1$ phrases. Hence, we get $\AS{\sub}(\LZSevenEight,n) \geq \LZSevenEight(T') - \LZSevenEight(T) = 5p+1 - 4p = p+1 \in \Omega(n)$ because $n = 7p$.
  
  
  In the case of insertions and deletions, with the strings
  obtained from $T$ 
  by inserting $\#$ between $a_1 = T[4p+1]$ and $b_1 = T[4p+2]$ or
  by deleting $a_1 = T[4p+1]$ respectively, 
  we get similar decompositions as in the case of substitutions,
  which give us $\AS{\ins}(\LZSevenEight,n), \AS{\del}(\LZSevenEight,n) \in \Omega(n)$.

\end{document}